\title[Corrector estimates of a locally-periodic scenario]{Corrector estimates for the homogenization of a locally-periodic medium with areas of low and high diffusivity}
\author[Adrian Muntean and Tycho van Noorden]{
 A.\ns M\ls U\ls N\ls T\ls E\ls A\ls N$\,^{1,2}$\ns \and T.\ns L.\ns V\ls A\ls N\ns N\ls O\ls O\ls R\ls D\ls E\ls N$\,^3$}
\affiliation{$^1\,$CASA - Centre for Analysis, Scientific computing and Applications,\\
 Department of Mathematics and Computer Science,\\
Technische Universiteit Eindhoven, P.O. Box 513, 5600 MB Eindhoven,
The Netherlands\\
$^2\,$Institute for Complex Molecular Systems (ICMS), Technische
Universiteit Eindhoven, P.O. Box 513, 5600 MB Eindhoven, The
Netherlands\\
$^3$ Chair of Applied Mathematics 1,
Department of Mathematics, University of Erlangen-N\"urnberg,
Martensstra\ss e 3, 91058 Erlangen, Germany
}
\begin{document}
\maketitle

\begin{abstract}
We prove an upper bound for the convergence rate of the
homogenization limit $\epsilon\to 0$ for a linear transmission
problem for a  advection-diffusion(-reaction) system posed in areas
with low and high diffusivity, where $\epsilon$ is a suitable scale
parameter. On this way, we justify the formal homogenization
asymptotics obtained by us earlier by proving an upper bound for the
convergence rate (a corrector estimate). The main ingredients of the
proof of the corrector estimate include integral estimates for rapidly oscillating functions with prescribed average, properties of the
macroscopic reconstruction operators, energy bounds and extra
two-scale regularity estimates. The whole procedure essentially
relies on a good understanding of the analysis of the limit
two-scale problem.
\end{abstract}

{\bf Keywords}: Corrector estimates, transmission condition,
 homogenization, micro-macro transport,
reaction-diffusion system in heterogeneous materials

{\bf MSC 2010}: 35B27; 35K57; 76S05

%\pagestyle{myheadings}
%\thispagestyle{plain}
%\markboth{}
%{}
\newtheorem{theorem}{Theorem}[section]
\newdefinition{remark}[theorem]{Remark}
\newdefinition{assumption}{Assumption}
\newdefinition{lemma}[theorem]{Lemma}
\newdefinition{claim}[theorem]{Claim}
\newdefinition{definition}[theorem]{Definition}
\newdefinition{proposition}[theorem]{Proposition}

\section{Introduction}

We study the averaging of a system of reaction-diffusion equations
with linear transmission condition posed in a class of highly
heterogeneous media including areas of low and high diffusivity. Our
aim is twofold: on one hand, we wish to justify rigorously the
formal asymptotics expansions performed in \cite{Tycho_Adrian},
while on the other hand we wish to understand the error caused by
replacing a heterogeneous solution by an approximation (averaged)
estimate. To this end, we prove an upper bound for the convergence
rate of the limit procedure $\epsilon\to 0$, where $\epsilon$ is a
suitable scale parameter (see section \ref{microstructure} for the
definition of $\epsilon$). The materials science scenario we have in
view is motivated by a very practical problem: the sulfate corrosion
of concrete. We refer the reader to
 \cite{Beddoe} for a nice and detailed description of the physico-chemical scenario and to  \cite{Tasnim1} for the formal (two-scale) averaging of
 locally-periodic distributions of unsaturated pores attacked by sulphuric
 acid. The reference
 \cite{Tasnim2} contains the rigorous proof of the limiting procedure $\epsilon\to0$ treating the uniformly periodic case of
 the same corrosion scenario. In \cite{Tasnim2}, the main working tools involve the two-scale convergence concept in the sense of Nguetseng and
 Allaire\footnote{For an introduction to two-scale convergence, see
 \cite{Lukkassen}.}
 combined with  a periodic unfolding of the oscillatory boundary.
 Here, we use an energy-type method.

The framework that we tackle in this context is essentially a
deterministic one. We assume that a distribution of the {\em
locally-periodic array of microstructures}\footnote{Here, we deviate
from the purely periodic setting. It is worth noting also that the
assumption of statistically homogeneity of distribution of
microstructures, which is a crucial restriction of stochastic
homogenization, does not cover
 all possible configurations of locally-periodic geometries. Also, as we formulate the working framework, we cannot treat neither randomly placed microstructures
 nor stochastic distributions of microgeometries. Consequently, the exact connection between these two averaging techniques is hard to make precise.} is known {\em a priori}. We refer the reader to \cite{Capriz} for related discussion of continuum mechanical descriptions of balance laws in continua with microstructure.  At the mathematical level, we succeed to
combine successfully  the philosophy of getting correctors as
explained in the analysis by Chechkin and Piatnitski \cite{Chechkin}
with the intimate two-scale structure of our system; see also
\cite{CAM,CF} for related settings where similar averaging
strategies are used. For methodological hints on how to get corrector estimates, we
refer the reader to the analysis shown in
\cite{Eck_correctors} for the case of a reaction-diffusion phase
field-like system posed in fixed domains. For a higher level of discussion, see the monograph
\cite{CPS} which is a nicely written introductory text on homogenization
methods and applications. It is worth noting that the periodic
homogenization of linear transmission problems is a well-understood
subject (cf. \cite{Conca0,Ene,AG,fast}, e.g.), however much less is
known if one steps away from the periodic setting even if one stays
within the deterministic case. If the microstructures are
distributed in a suitable random fashion, then concepts like random
fields (see \cite{Belyaev_Ef} or \cite{Mikelic_random_fields} and
references cited therein; \cite{Heida2009} and intimate connection
with the stochastic geometry of the perforations) can turn to be
helpful to getting averaged equations (eventually also capturing new
memory terms), but corrector estimates seem to be very hard to get;
cf., e.g., \cite{Bal}. Disordered media for which the stationarity
and/or ergodicity assumptions on the random measure do not hold or
situations where at a given time scale length scales are
non-separated are typical situations that cannot be averaged with
existing techniques (so it makes no sense to search here for correctors).

This paper is organized as follows:
In Section \ref{problem} we introduce the necessary notation, the
microscale and the two-scale limit model.
Section \ref{main_result} contains the main result of our paper --
Theorem \ref{MR}. In Section \ref{technical} we introduce the
technical assumptions needed to obtain the correctors. At this
point, we also collect the known well-posedness and regularity
results for both the microscopic model and the two-scale limit
model. The rest of the paper consists of the proof of the
convergence rate and is the subject of Section \ref{Proofs}.

\section{Statement of the problem}\label{problem}
In this section we introduce our notation, the microscale model and the two-scale limit problem. We start off with the definition of the
locally periodic heterogeneous medium.

\subsection{Definition of the locally-periodic micro medium}\label{microstructure}

%\section{Remarks on the choice of the microstructure}\label{microstructure}

%We consider a heterogenous medium consisting of areas of
%high and low diffusivity.
%The medium is  represented by a two dimensional domain.
%We denote the two dimensional bounded domain by $\Omega\subset \mathbb{R}^2$, with boundary $\Gamma$, and we suppose that the areas of the medium with low
%diffusivity are circles. Let the centers of the circles $B_{ij}$, with radius $R_{ij}<\epsilon/2$, be located in a equidistant grid with nodes at $(\epsilon i, \epsilon j)$, where
%$\epsilon$ is a small dimensionless length scale. We assume that there is given
%a function $r(x):\Omega \rightarrow [0,1/2)$ such that
%the radii $R_{ij}$ of the circles $B_{ij}$ are given
%by $R_{ij}=\epsilon r(x_{ij})$, where $x_{ij}=(\epsilon i, \epsilon j)$.
%This periodic distribution of the area with low permeability is of course not very realistic, but it is frequently
%made in homogenization theory \cite{hornung} in order to derive macroscopic models which are then also used for more general situations.

\begin{figure}[hbpt]
\begin{center}
\includegraphics[width=8cm]{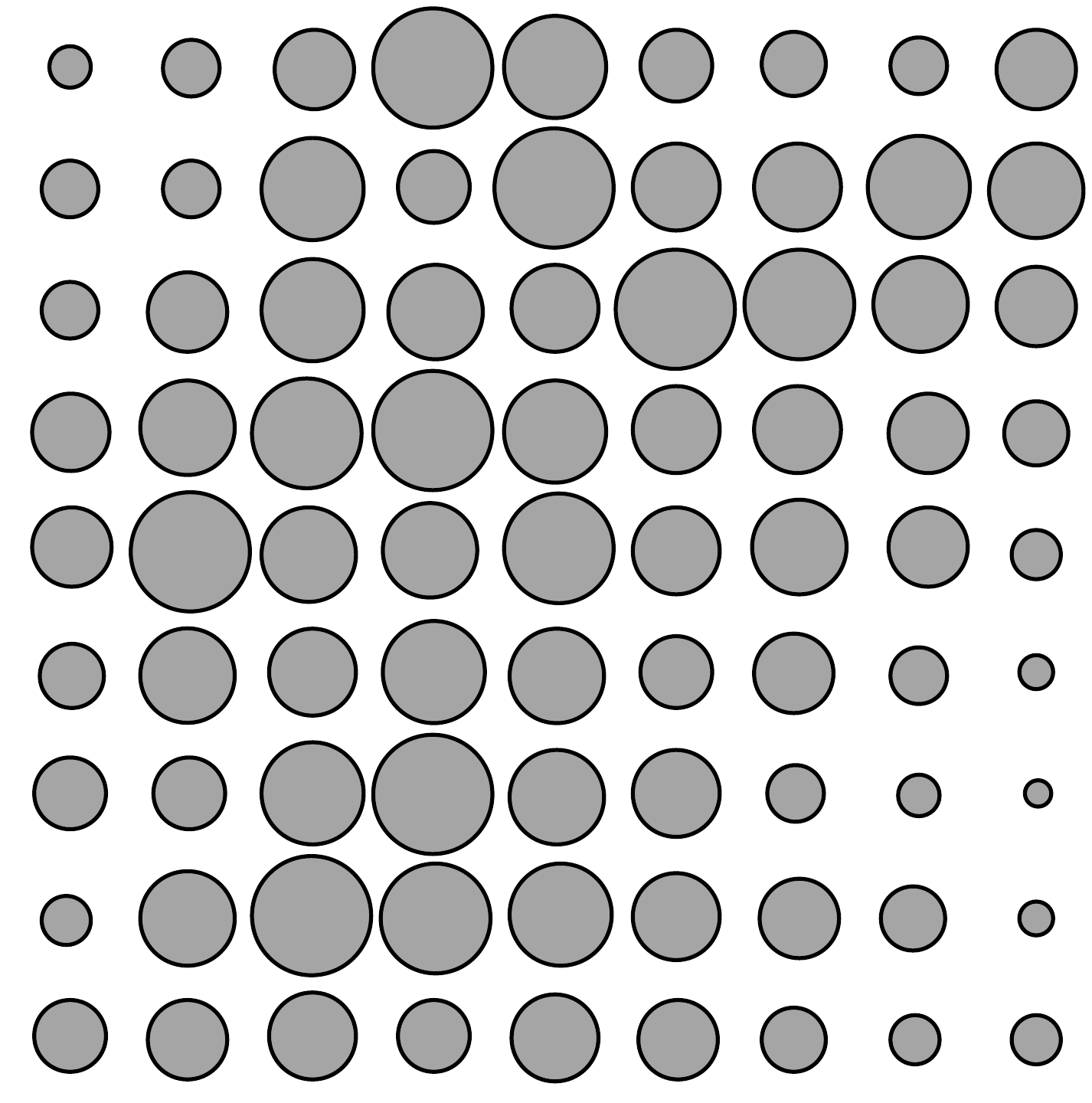}
%\vspace{-0.7cm}
\end{center}
\caption{Schematic representation of a locally-periodic heterogeneous medium.
The centers of the gray circles are on a grid with width $\epsilon$.
These circles represent the areas of low diffusivity and their radii may vary.
\label{fig1}}
\end{figure}

We consider a heterogenous medium consisting of areas of
high and low diffusivity.
The medium is in the present paper represented by a two dimensional domain. For the definition of the locally-periodic medium with inclusions, we follow in main lines \cite{Chechkin}.
We denote the two dimensional bounded domain by $\Omega\subset \mathbb{R}^2$, with boundary $\Gamma$.
Denote
\begin{align*}
&J^\epsilon:=\{j\in\mathbb{Z}^2\,|\,\mbox{dist}(\epsilon j,\Gamma)\geq\epsilon\sqrt{2}\},\\
&{\cal U}:=\{y\in \mathbb{R}^2\,|\, -1/2\leq y_i \leq 1/2\,\, \mbox{for}\,\, i=1,2\}).
\end{align*}
A convenient way to parameterize the interface $\Gamma^\epsilon$ between the high and low diffusivity areas, is to use a level set function, which we denote by
$S^\epsilon(x)$:
\begin{align*}
x\in \Gamma^\epsilon \Leftrightarrow S^\epsilon(x)=0,
\end{align*}
Since we allow the size and shape of the perforations to vary with the
macroscopic variable $x$, we use the following characterization
of $S^\epsilon$:
\begin{align}
S^\epsilon(x):=S(x,x/\epsilon),
\label{locper}
\end{align}
where $S:\Omega\times {\cal U} \rightarrow \mathbb{R}$ is 1-periodic in its second variable, and where $S$ is independent of $\epsilon$.
We assume that $S(x,0)<const.<0$ and $S(x,y)|_{y\in\partial U}>const.>0$ for all $x\in\Omega$, so that the areas of low diffusivity in each unit cell do not touch each other.
We set
\begin{align*}
Q_j^\epsilon:=\{x\in\epsilon({\cal U}+j)\,|\,S(x,x/\epsilon)<0\},
\end{align*}
and introduce the area of low diffusivity $\Omega_l^\epsilon$ as follows:
\begin{align*}
\Omega_l^\epsilon:=\bigcup_{j\in J^\epsilon}Q_j^\epsilon,
\end{align*}
and the area of high diffusivity $\Omega_h^\epsilon$ as:
\begin{align*}
\Omega_h^\epsilon:=\Omega\backslash\Omega_l^\epsilon.
\end{align*}

We call a medium of which the geometry is specified with a level set function of the type that is given in \eqref{locper} a locally periodic medium \cite{Chechkin}. In Fig.\ \ref{fig1} a schematic
picture is given of how such a medium might look like for a given $\epsilon>0$. Note that by construction the area of high diffusivity $\Omega_h^\epsilon$ is connected and the area of low diffusivity $\Omega_l^\epsilon$ is disconnected.
The interface between high and low diffusivity areas $\Gamma^\epsilon$ is now given by $\Gamma^\epsilon=\partial \Omega_l^\epsilon$ and the boundary of $\Omega_h^\epsilon$ is given by $\partial \Omega_h^\epsilon=\Gamma^\epsilon\cup \Gamma$.

Furthermore, we use the notation $\Omega_1^\epsilon:=\Omega\backslash\bigcup_{j\in J^\epsilon}(\epsilon({\cal U}+j))$, and we introduce for later use the smooth cut-off function $\chi_\epsilon(x)$ that satisfies
$0\leq \chi_\epsilon(x)\leq 1$, $\chi_\epsilon(x)=0$ if $x\in\Omega_1^\epsilon$ and $\chi_\epsilon(x)=1$ if $\mbox{dist}(x,\Omega_1^\epsilon)\geq\mbox{dist}(\Gamma^\epsilon,\Omega_1^\epsilon)$.
Moreover, $\epsilon|\nabla \chi_\epsilon|\leq C$ and $\epsilon^2|\Delta \chi_\epsilon|\leq C$, with $C$ independent of $\epsilon$,
and also
\begin{align}
\|1-\chi_\epsilon\|_{L^2(\Omega)}\leq \epsilon^{1/2} C,\nonumber\\
\|\nabla \chi_\epsilon\|_{L^2(\Omega)}\leq \epsilon^{-1/2} C,\label{boundschi}\\
\|\Delta \chi_\epsilon\|_{L^2(\Omega)}\leq \epsilon^{-3/2} C, \nonumber
\end{align}
with $C$ again independent of $\epsilon$ (see e.g.\ \cite{Chechkin,Eck_correctors}).

%$$
%\Gamma^\epsilon:=\{x\in\Omega\,|\,S^\epsilon(x)=0\}.
%$$
%We assume that $S_0(x,0)<const.<0$ and $S_0(x,y)|_{y\in\partial U}>const.>0$ for all $x\in\Omega$, and that the $S_i$, for $i=0,1,2,..$, are bounded, so that in the limit $\epsilon \rightarrow 0$ the areas of low diffusivity in each unit cell do not touch each other.

%n (\ref{dimlesseq2}$_1$) we have used the superscript $\epsilon$ for the normal
%vector $\nu^\epsilon$ in the interface conditions for $v^\epsilon$ and $u^\epsilon$. The reason is that the normal vector depends
%on the geometry of the different regions, and this in turn depends on
%$\epsilon$. In order to perform the steps of formal homogenization,
In addition, we need to expand the normal $\nu^\epsilon$ to $\Gamma^\epsilon$ in a power series in $\epsilon$.
This can be done in terms of the level set function $S^\epsilon$, which we assume to be sufficiently regular so that all the following computations make sense (see assumption (B1) in section \ref{technical}):
\begin{align}
\nu^\epsilon=\frac{\nabla S^\epsilon(x)}{|\nabla S^{\epsilon}(x)|}=\frac{\nabla S(x,x/\epsilon)}{|\nabla S(x,x/\epsilon)|}=\frac{\nabla_x S+\frac{1}{\epsilon}\nabla_yS}{|\nabla_x S+\frac{1}{\epsilon}\nabla_yS|}
\,\,\, \mbox{at}\,\,\, x\in \Gamma^\epsilon.
\end{align}
% whose
%zero level set corresponds to the interface $\Gamma^\epsilon(t)$.
%We assume there is a function $S(x,y,t)$ with $S(\cdot,y,\cdot)$ 1-periodic in $y$, such that $\Gamma^\epsilon(t)=\{x\in\Omega \,|\, S(x,x/\epsilon,t)=0\}$.
First we expand $|\nabla S^{\epsilon}|$. Using
%chain rule \eqref{chainrule} (see also \cite{hornung}),
%the expansion \eqref{locper} of $S^{\epsilon}$ and
the Taylor series of the
square-root function, we obtain
\begin{eqnarray}
|\nabla S^{\epsilon}|&=&
\frac{1}{\epsilon}|\nabla_{y} S|+O(\epsilon^0). \label{nabsex}
\end{eqnarray}
In the same fashion, we get
\begin{eqnarray*}
\nu^{\epsilon}=\nu_{0}+\epsilon \nu_{1}+O(\epsilon^{2}),
\end{eqnarray*}
where
\begin{eqnarray*}
\nu_{0}:=\frac{\nabla_{y}S}{|\nabla_{y}S|}
\end{eqnarray*}
and
\begin{eqnarray}
\nu_{1}&:=&\frac{\nabla_{x}S}{|\nabla_{y}S|}-
\frac{(\nabla_{x}S\cdot \nabla_{y}S)}{|\nabla_{y}S|^{2}}\frac{\nabla_{y}S}{|\nabla_{y}S|}. \label{nu11}
\end{eqnarray}
If we introduce the normalized tangential vector $\tau_{0}$, with
$\tau_0\perp\nu_{0}$,
we can rewrite $\nu_1$ as
\begin{eqnarray}
\nu_{1}&=&\tau_{0}\frac{\tau_{0}\cdot(\nabla_{x}S)}{|\nabla_{y}S|}.\label{nu1}
\end{eqnarray}
Later on we will also use the notation
\begin{align*}
\nu_1^\epsilon=\epsilon^{-1}(\nu^\epsilon-\nu_0)=\nu_1+O(\epsilon).
\end{align*}

\subsection{Microscopic equations and their upscaled form}

We focus on the following microscopic model
\begin{eqnarray}
&&\begin{cases}
\partial_t u_{\epsilon}=\nabla\cdot(D_h \nabla u_{\epsilon}-q_{\epsilon}u_{\epsilon}) &\\
%q_{\epsilon} = -\kappa \nabla p_{\epsilon} &\\
%\nabla\cdot q_{\epsilon}=0 &
\end{cases}
\,\,\,\,\,\,\mbox{   in   } \Omega_h^{\epsilon},\label{dimlesseq1}
\\
&&
\begin{cases}
\partial_t v_{\epsilon}=\epsilon^2\nabla\cdot(D_l \nabla v_{\epsilon}) &
\,\,\,\,\,\,\mbox{   in   } \Omega_l^{\epsilon},
\end{cases}\label{dimlesseq1b}
\\
%&&\begin{cases}
%S^{\epsilon}_{t}=\epsilon k (r(\bar{u}^{\epsilon})-w^\epsilon)|\nabla S^{\epsilon}|,&\\
%w^\epsilon\in H(\mbox{dist}(x,B^{\epsilon})),&
%\end{cases}\,\,\,\,\,\,\mbox{   in   } \Omega,\label{dimlesseq1b}\\
&&\begin{cases}\label{transmission}
\nu^\epsilon\cdot(D_h \nabla u_{\epsilon})=
\epsilon^2\nu^\epsilon\cdot(D_l \nabla v_{\epsilon}) & \\
u_\epsilon=v_\epsilon & \\
%q_{\epsilon}=0 &
\end{cases}
\,\,\,\,\,\, \mbox{   on   } \Gamma^{\epsilon}, \label{dimlesseq2}\\
&&\begin{cases}
u_\epsilon(x,t)=u_b(x,t) & \\
%q_\epsilon(x,t)=q_b(x,t) & \\
\end{cases} \,\,\,\,\,\,\, \mbox{on}\,\,\, \Gamma,\label{bcg}\\
&&\begin{cases}
u_{\epsilon}(x,0)=u_\epsilon^{I}(x)&\mbox{   in   } \Omega_h^{\epsilon},\\
v_{\epsilon}(x,0)=v_\epsilon^{I}(x)&\mbox{   in   } \Omega_l^{\epsilon}.
\end{cases}
\label{dimlesseq3}
\end{eqnarray}
Here we denoted the tracer concentration in the high diffusivity area by
$u_\epsilon$, the concentration in the low diffusivity area by $v_\epsilon$, and the
velocity of the fluid phase by $q_\epsilon$. Furthermore,
$D_h$ denotes the diffusion coefficient in the high diffusivity
region, $D_l$ the diffusion coefficient in the low diffusivity regions, $\nu^\epsilon$ denotes the unit normal to the boundary $\Gamma^\epsilon(t)$, where  $u_b$ denotes the Dirichlet boundary data for
the concentration $u^\epsilon$, and where
$u^I_\epsilon$ and $v^I_\epsilon$ are the initial values for the active concentrations
$u_\epsilon$ and $v_\epsilon$, respectively.

%The microstructures of the material, say the perforations, are described in this framework by means of an {\em a priori} given  sufficiently regular mapping $S(\cdot,\cdot)$ (see e.g. Assumption (B1) in section \ref{technical}). To be more precise, we indicate by $B(x)$  the perforation (microstructure) located at the point $x$, encapsulated in the reference cell $[0,1]^d$. $B(x)$ is defined by
In order to formulate the upscaled equations and obtain closed-formulae for the effective transport coefficients, we use the notations
\begin{equation}
B(x):=\{y\in {\cal U}: S(x,y)<0\},
\end{equation}
and
\begin{equation}
Y(x):={\cal U}-B(x),
\end{equation}
%Bearing these notations in mind,  we have
% \begin{eqnarray}
%|\nabla S^{\epsilon}|&=&
%\frac{1}{\epsilon}|\nabla_{y} S|+\mathcal{O}(\epsilon^0)\\
%\nu^{\epsilon}&=&\nu_y+\epsilon \nu_{1}+\mathcal{O}(\epsilon^{2}),\\
%\Omega_\ell^\epsilon & = & \{S(x,x/\epsilon)<0: x\in \Omega\},\\
%\Omega_h^\epsilon & = & \{S(x,x/\epsilon)>0: x\in \Omega\}
%\end{eqnarray}
%$\Omega_h^\epsilon$ and $\Omega_l^\epsilon$ represent the regions of \underline{high}, and respectively, \underline{low}  diffusivity.
%The precise definitions of $\nu_y$, $\nu_{1}$, and $\nu^{\epsilon}$ are given together with further details on the choice of the microstructure in section \ref{microstructure}.
%In order to formulate the upscaled equations and obtain closed-formulae for the effective transport coefficients,
and we define the following $x$-dependent cell problem:\footnote{See, e.g.\ \cite{CP,hornung}, for the role of cell problems in formulating averaged equations.}
\begin{eqnarray}
\begin{cases}
\Delta_{y}M_{j}(x,y)=0 & \mbox{for all}\,\,\, x\in\Omega,\, y\in Y(x),\\
\nu_{0}\cdot\nabla_{y}M_{j}(x,y)=-\nu_0\cdot e_{j}& \mbox{for all }\ x\in\Omega,\, y\in \partial B(x),\\
\mbox{$M_j(x,y)$ $y$-periodic},&
\end{cases} \label{cellpu}
\end{eqnarray}
%and
%\begin{eqnarray}
%\begin{cases}
%V_{j}(x,y)=\nabla_{y}\pi_{j}(x,y)+e_{j} & \mbox{for all}\,\,\,x\in\Omega,\, y\in Y(x),\\
%\nabla_{y}\cdot V_{j}(x,y)=0 & \mbox{for all}\,\,\, x\in\Omega,\, y\in Y(x),\\
%V_{j}=0&\mbox{for all}\,\,\,x\in\Omega,\, y\in B(x),\\
%\mbox{$w_j(x,y)$ and $\pi_j(x,y)$ $y$-periodic},&
%%\end{cases}\label{cellpq}
%\end{eqnarray}
for $j\in\{1,2\}$. To ensure the uniqueness of weak solutions to this cell problem, suitable conditions on the spatial averages of the cell functions need to be added.

The solution to this cell problems allows us to write the results of the
formal homogenization procedure in the form of the following
distributed-microstructure (two-scale) model
\begin{eqnarray}\label{upsceq}
&&\begin{cases}
%\partial_{t}S_{0}(x,y,t)+f(u_{0}(x,t),y)|\nabla_{y}S_{0}(x,y,t)|=0& \mbox{for}\,\,\,y\in[0,1]^{2},\,\,x\in\Omega\\
\partial_{t}v_{0}(x,y,t)=D_l \Delta_y v_0(x,y,t)& \mbox{for}\,\,\,y\in B(x),\,\,x\in\Omega,\\
\partial_{t}\left(\theta(x)u_{0}+\int_{|y|<r(x)} v_0\,dy\right)= \\
\hspace{4cm}\nabla_{x}\cdot
({\cal D}(x)\nabla_{x}u_{0}-\bar{q}u_0)&\mbox{for}\,\,\,x\in
\Omega,
\end{cases} \label{upsc1}\\
&&\begin{cases}
v_0(x,y,t)=u_0(x,t)&\mbox{for}\,\,\,y\in\partial B(x),\\
u_0(x,t)=u_b(x,t)&\mbox{for}\,\,\,x\in\Gamma,
%\bar{q}(x,t)=q_b(x,t)&\mbox{for}\,\,\,x\in\Gamma,
\end{cases}\label{upsc2}\\
&&\begin{cases}
u_0(x,0)=u_I(x)&\mbox{for}\,\,\,x\in
\Omega,\\
v_0(x,y,0)=v_I(x,y)& \mbox{for}\,\,\,y\in B(x),\,\,x\in\Omega.
\end{cases}\label{upsc3}
\end{eqnarray}
where the porosity $\theta(x)$ of the medium is given by
\begin{align*}
\theta(x):=|Y(x)|=1-|B(x)|,
\end{align*}
and where the effective diffusivity tensor ${\cal D}(x)$ is defined by
\begin{eqnarray*}
{\cal D}(x):=D_h\int_{Y(x)}(I+\nabla_yM(x,y))\,dy,
\end{eqnarray*}
%and
%\begin{eqnarray*}
%k_{ij}(x):=\int_{Y(x)}V_{ji}(x,y,t)\,dy,
%\end{eqnarray*}
with $M=(M_1,M_2)$.

%\subsection{Organization of the paper}

%Section \ref{main_result} contains the main result of our paper --
%Theorem \ref{MR}. In Section \ref{technical} we introduce the
%technical assumptions needed to obtain the correctors. At this
%point, we also collect  known well-posedness and regularity
%results for both the microscopic model and the two-scale limit
%model. The bulk of the paper consists in the proof of the
%convergence rate and is the subject of Section \ref{Proofs}. %We
%close the paper with an Appendix, where we discuss our concept and
%choice of locally-periodic array of microstructures.

\section{Main result}\label{main_result}

Recall that  $(u_\epsilon, v_\epsilon)$ is the  solution vector for the micro problem and $(u_0, v_0)$ is the solution vector for the two-scale problem. We introduce now the macroscopic reconstructions\footnote{We borrowed this terminology from \cite{Eck_correctors}. Note however that the concept of reconstruction operators appears in various other frameworks like for the heterogeneous multiscale method \cite{HMM}.} $u_0^\epsilon, v_0^\epsilon, u_1^\epsilon$,
which are defined as follows
\begin{eqnarray}
u_0^\epsilon(x,t)&:=&u_0(x,t) \mbox{ for all } x\in \Omega_h^\epsilon,\\
v_0^\epsilon(x,t)&:=&v_0(x,x/\epsilon,t) \mbox{ for all } x\in \Omega_l^\epsilon,\\
u_1^\epsilon(x,t)&:=&u_0^\epsilon(x,t)+\epsilon M(x,x/\epsilon)\nabla u_0^\epsilon(x,t) \mbox{ for all } x\in \Omega_h^\epsilon,
\end{eqnarray}
In the same spirit, we introduce the reconstructed flow velocity $q^\epsilon_0(x)=q(x)$ for all  $x\in \Omega_h^\epsilon$ and the corresponding reconstructions $u_{0I}^\epsilon,v_{0I}^\epsilon$ for the macroscopic initial data $u_I$ and $v_I$, respectively.

The main result of our paper is stated in the next Theorem. The
applicability of this result confines to our working assumptions
(A1), (A2), (B1), (B2), and (B3) that we introduce in Section
\ref{technical}.

\begin{theorem}\label{MR}
Assume (A1), (A2), (B1), (B2), and (B3). Then the following
convergence rate holds
\begin{eqnarray}
||u_\epsilon-u_0^\epsilon||_{L^\infty(I, L^2(\Omega_h^\epsilon))}+||v_\epsilon-v_0^\epsilon||_{L^\infty(I, L^2(\Omega_l^\epsilon))}+\nonumber\\
||u_\epsilon-u_1^\epsilon||_{L^\infty(I, H^1(\Omega_h^\epsilon))}+\epsilon||v_\epsilon-v_0^\epsilon||_{L^\infty(I, H^1(\Omega_l^\epsilon))}\leq c\sqrt{\epsilon},
\end{eqnarray}
where $I=(0,T]$, and where the constant $c$ is independent of $\epsilon$.
\end{theorem}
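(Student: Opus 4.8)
The plan is to run an energy argument: I would define error functions measuring the discrepancy between the microscopic solution and the reconstructed two-scale solution, derive the (inhomogeneous) equations they satisfy, test with the errors themselves, and close via Gronwall, organizing the residuals so that the locally-periodic corrector cancels the leading-order oscillation and everything left is of size $\sqrt{\epsilon}$. Concretely, I would set $\phi_\epsilon := u_\epsilon - u_1^\epsilon$ on $\Omega_h^\epsilon$ and $\psi_\epsilon := v_\epsilon - v_0^\epsilon$ on $\Omega_l^\epsilon$. Since $u_1^\epsilon$ and $v_0^\epsilon$ are reconstructions, they solve the microscopic system (\ref{dimlesseq1})--(\ref{dimlesseq3}) only up to residuals: bulk source terms $R_h^\epsilon, R_l^\epsilon$, a residual in the flux transmission condition on $\Gamma^\epsilon$ (because the transmission holds only to $O(\epsilon)$), and initial/boundary mismatches. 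The key algebraic point is that $\nabla u_1^\epsilon = (I+\nabla_y M)\nabla u_0 + \epsilon(\nabla_x M\,\nabla u_0 + M\nabla^2 u_0)$, so $D_h\nabla u_1^\epsilon$ reproduces the cell-problem flux to leading order, and the definition of $\mathcal D(x)$ is precisely what makes the divergence of this flux agree with the homogenized equation after cell-averaging.

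Next I would pass to the weak formulations of the error equations and test with $\phi_\epsilon$ and $\psi_\epsilon$. On the high-diffusivity side this produces $\tfrac12\tfrac{d}{dt}\|\phi_\epsilon\|_{L^2(\Omega_h^\epsilon)}^2 + D_h\|\nabla\phi_\epsilon\|_{L^2(\Omega_h^\epsilon)}^2$, while the $\epsilon^2$ scaling of (\ref{dimlesseq1b}) gives $\tfrac12\tfrac{d}{dt}\|\psi_\epsilon\|_{L^2(\Omega_l^\epsilon)}^2 + \epsilon^2 D_l\|\nabla\psi_\epsilon\|_{L^2(\Omega_l^\epsilon)}^2$; the latter is exactly why the statement carries the extra factor $\epsilon$ in front of the $H^1(\Omega_l^\epsilon)$ norm (an $O(\epsilon)$ bound on the energy forces $\epsilon\|\nabla\psi_\epsilon\|_{L^2}\le c\sqrt\epsilon$). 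The continuity transmission $u_\epsilon=v_\epsilon$ is used to make the two interface integrals arising from integration by parts combine, so that only the residual in the flux transmission condition survives as a boundary term on $\Gamma^\epsilon$, and that residual is governed by the normal expansion $\nu^\epsilon=\nu_0+\epsilon\nu_1+O(\epsilon^2)$ recorded in (\ref{nu1}).

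I would then estimate each right-hand side. The bulk residuals $R_h^\epsilon,R_l^\epsilon$ are controlled by the assumed two-scale regularity of $(u_0,v_0)$ from (B2)--(B3) together with the smoothness of the cell functions in the slow variable; the purely oscillatory, zero-cell-average parts are reduced to order $\epsilon$ or $\sqrt\epsilon$ by the integral estimates for rapidly oscillating functions with prescribed average. The contributions supported in the boundary-layer strip $\Omega_1^\epsilon$, where cells are cut by $\Gamma$ and periodicity breaks down, are isolated by the cut-off $\chi_\epsilon$ and absorbed using the bounds (\ref{boundschi}); it is precisely $\|1-\chi_\epsilon\|_{L^2(\Omega)}\le\epsilon^{1/2}C$ against $\|\nabla\chi_\epsilon\|_{L^2(\Omega)}\le\epsilon^{-1/2}C$ that fixes the $\sqrt\epsilon$ bottleneck. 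Once every term is bounded by $c\sqrt\epsilon\,(\|\phi_\epsilon\|+\|\psi_\epsilon\|)$ plus $c\epsilon$-type remainders, I would sum the two energy identities, use Young's inequality to absorb the gradient couplings into the dissipation, and apply Gronwall in time; the $L^2$ bound for the uncorrected reconstruction then follows from $\|u_\epsilon-u_0^\epsilon\|\le\|u_\epsilon-u_1^\epsilon\|+\|u_1^\epsilon-u_0^\epsilon\|$ with $\|u_1^\epsilon-u_0^\epsilon\|=\epsilon\|M\nabla u_0\|=O(\epsilon)$.

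The hard part will be the interface term on $\Gamma^\epsilon$. Because the medium is only locally periodic, the normal carries a genuine first-order correction $\nu_1=\tau_0\,(\tau_0\cdot\nabla_x S)/|\nabla_y S|$ generated by the slow variation of the level set, and the inclusion geometry $B(x)$ itself depends on $x$; neither effect cancels by the symmetry that would be available in the purely periodic case. Controlling the resulting integrals demands sharp trace and oscillatory estimates on $\Gamma^\epsilon$, whose total length is of order $\epsilon^{-1}$, and the delicate point is to balance the $O(\epsilon)$ smallness of the corrector against the $O(\epsilon^{-1/2})$ growth of boundary-layer gradients and the trace constants so as to land exactly at the rate $\sqrt\epsilon$ and no worse.
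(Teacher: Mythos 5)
Your overall strategy --- corrector-augmented error functions, energy testing, oscillatory-integral estimates, a cutoff near $\Gamma$, Gronwall --- is the paper's strategy, but two concrete points keep your plan from closing. First, you cannot test with $\phi_\epsilon := u_\epsilon - u_1^\epsilon$ as you define it: on the outer boundary $\Gamma$ one has $u_\epsilon = u_0 = u_b$, so $\phi_\epsilon = -\epsilon M(x,x/\epsilon)\nabla u_0 \neq 0$ there, hence $\phi_\epsilon \notin H^1(\Omega_h^\epsilon;\partial\Omega)$ and integration by parts produces a boundary integral $\int_\Gamma \nu\cdot(D_h\nabla u_\epsilon)\,\phi_\epsilon\,ds$ involving the normal flux of $u_\epsilon$ on $\Gamma$, which is not controlled by the energy. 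The paper avoids this by placing the cutoff \emph{inside} the error function, $z_\epsilon := u_0 + \epsilon\chi_\epsilon M\nabla u_0 - u_\epsilon$, so that $z_\epsilon$ vanishes on $\Gamma$ (since $\chi_\epsilon \equiv 0$ on $\Omega_1^\epsilon$), runs the entire energy argument for $z_\epsilon$, and only at the very end converts back to $u_\epsilon - u_1^\epsilon$ via the estimate $\|\epsilon u_1(1-\chi_\epsilon)\|_{H^1(\Omega_h^\epsilon)} \le C\sqrt{\epsilon}$. You invoke $\chi_\epsilon$ and the bounds (\ref{boundschi}) only as estimation tools applied to residuals; used a posteriori they cannot repair an inadmissible test function --- the placement matters.

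Second, you correctly single out the interface term generated by $\nu_1$ as ``the hard part,'' but you leave it as a difficulty to be balanced by ``sharp trace and oscillatory estimates'' rather than supplying the idea that actually removes it. In the paper this is Lemma \ref{lem1}: by Reynolds' transport theorem (differentiating $\int_{Y(x)}$ in the slow variable, under which $B(x)$ moves), the defect between $\nabla_x\cdot\big((I+\nabla_y M)\nabla_x u_0\big)$ and its cell average equals \emph{exactly} $-\int_{\partial B(x)} \nu_1\cdot(I+\nabla_y M)\nabla_x u_0\,d\sigma$. This identity shows that the bulk oscillatory residual $Q$ and the interface residual $p$ satisfy precisely the compatibility hypothesis $\int_{Y(x)} Q\,dy = \int_{\partial B(x)} p\,d\sigma$ of the oscillating-function estimate (Lemma \ref{lem2}), so the bulk defect and the $\nu_1$-interface integral compensate each other and the pair is bounded by $C\epsilon\|z_\epsilon\|_{H^1(\Omega_h^\epsilon)}$; no direct trace estimate on $\Gamma^\epsilon$ (whose total length is $O(\epsilon^{-1})$) is ever needed. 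This cancellation mechanism is exactly where the locally-periodic geometry ($x$-dependent $B(x)$, the tangential correction $\nu_1$) enters the proof; without it your interface estimates would not close at the rate $\sqrt{\epsilon}$, so this is a missing idea rather than a technicality.
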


The remainder of the paper is concerned with the proof of Theorem \ref{MR}.

\section{Technical preliminaries}\label{technical}

\subsection{Function Spaces. Assumptions. Known results}

\subsubsection{Functional setting}

As space of test functions for the microscopic problem, we take the Sobolev space
$$H^1(\Omega_\epsilon;\partial\Omega):=\left\{\varphi\in H^1(\Omega_\epsilon) \mbox{ with } \varphi=0 \mbox{ at }\partial\Omega\right\}.$$
%Note that sometimes we make use of the following trace inequality: For all $\psi\in H^1(\Omega_\epsilon)$ we have
%\begin{equation}\label{trace_eps}
%||\psi||_{L^2(\Gamma^\epsilon)}\leq c\sqrt{\epsilon}||\psi||_{H^1(\Omega_\epsilon)},
%\end{equation}
%where the constant $c>0$ is independent of the choice of $\epsilon$.
%A proof of (\ref{trace_eps}) is given in Lemma 2.31 of \cite{CP}.
Since the formulation of the upscaled problem involves two distinct spatial variables $x\in\Omega$ and $y\in B(x)$ (with $B(x)\subset \Omega$),  we need to introduce the following spaces:
\begin{align}
&V_1:=H_0^1(\Omega),\\
&V_2:=L^2(\Omega;H^2(B(x))),\\
&H_1:=L^2_\theta(\Omega),\\
&H_2:=L^2(\Omega;L^2(B(x))).
\end{align}
%\begin{align}
%&V_1:=H_0^1(\Omega),\\
%&V_2:=L^2(\Omega;H^1(B(x))),\\
%&H_1:=L^2_\theta(\Omega),\\
%&H_2:=L^2(\Omega;L^2(B(x))).
%\end{align}
The spaces $H_2$ and $V_2$ make sense, for instance, as indicated in \cite{sebam_equadiff}.

\subsubsection{Assumptions for the microscopic model}

{\bf Assumption (A1)}:  We assume the following restrictions on data and parameters:  Take $D_h,D_l\in ]0,\infty[$, $u_\epsilon^I\in H^1(\Omega_h^\epsilon)$, $v_\epsilon^I\in H^1(\Omega_l^\epsilon)$, and $u_b\in H^1(I;H^3(\Gamma))$. Furthermore, we assume
\begin{eqnarray*}
||u_\epsilon^I-u_{0I}^\epsilon||_{L^\infty(\Omega_h^\epsilon)}=\mathcal{O}(\epsilon^{\theta_1}) \mbox{ with } \theta_1\geq \frac{1}{2} \\
||v_\epsilon^I-v_{0I}^\epsilon||_{L^\infty(\Omega_l^\epsilon)}=\mathcal{O}(\epsilon^{\theta_2}) \mbox{ with } \theta_2\geq \frac{1}{2}
\end{eqnarray*}
\newline
{\bf Assumption (A2)}: $q_\epsilon\in H^2(\Omega_h^\epsilon;\mathbb{R}^d)\cap L^\infty(\Omega_h^\epsilon;\mathbb{R}^d)$ with $\nabla\cdot q_\epsilon=0$ a.e. in $\Omega_h^\epsilon$. Additionally, we assume that
%the (Sobolev) extension of $q_\epsilon$ to be strongly convergent to $q$ in $H^2(\Omega;\mathbb{R}^d)$ and also that
\begin{equation}\label{st2}
||q_\epsilon-\frac{1}{\theta}\bar{q}||_{L^2(\Omega_h^\epsilon;\mathbb{R}^d)}=\mathcal{O}(\epsilon^{\theta_3})
\mbox{ with } \theta_3\geq \frac{1}{2}.\end{equation}

 \begin{remark} If one wishes to replace $q_\epsilon$ with the stationary Stokes or Navier-Stokes equations, then a few additional things have to be taken into account. One of the most striking facts is that the exponent
 $\theta$ arising in (\ref{st2}) seems to be restricted to $\theta_3=\frac{1}{6}$.  Consequently, this worsens essentially the convergence rate;
 see, for instance, \cite{Marusic} (Theorem 1) for a discussion of homogenization of the periodic case.
 \end{remark}

\subsubsection{Assumptions for the two-scale model}

{\bf Assumption (B1)}:
The level set function $S:\Omega\times{\cal U}\rightarrow \mathbb{R}$ is 1-periodic in its second variable and is in
$C^2(\Omega\times{\cal U})$.
%re exists a $C^2$-continuous function $S:\Omega\times B(0)\to \mathbb{R}$ such that the
%1-dimensional boundary $\Omega\times \partial B(x)$ of  $\Omega\times B(x)$ can be expressed as
%$$S(x,y)=0 \mbox{ if and only if } (x,y)\in \Omega\times \partial B(x).$$
%Assume additionally that the Clarke gradients  $\partial_xS(x,y)$  and $\partial_yS(x,y)$ are regular for all choices of $(x,y)\in \Omega\times B(0)$.

%\newline
{\bf Assumption (B2)}:  We assume the following restrictions on data and parameters:
\begin{align*}
\begin{cases}
\theta,\, D \in L^\infty_+(\Omega)\cap H^2(\Omega),&\\
\bar{q}\in H^2(\Omega;\mathbb{R}^d)\cap L^\infty(\Omega;\mathbb{R}^d)\,\,\mbox{with}\,\, \nabla\cdot \bar{q} =0 \mbox{ a.e. in } \Omega,&\\
u_b\in L_+^\infty(\Omega\times I)\cap H^1(I;H^3(\Gamma)),&\\
\partial_tu_b\leq 0\,\, \mbox{a.e.}\,\,(x,t)\in\Omega\times I,&\\
u_I\in L_+^\infty(\overline{\Omega})\cap H_1\cap H^2(\Omega),&\\
v_I(x,\cdot)\in L_+^\infty(B(x))\cap H_2\,\,\mbox{for a.e.}\,\,x\in\overline{\Omega}.&
\end{cases}
\end{align*}

%\newline
{\bf Assumption (B3)}:
\begin{align}
&H_1:=H_0^1(\Omega),\\
&H_2:=L^2(\Omega;H^2(B(x))),\\
&V_1:=H^2_\theta(\Omega),\\
&V_2:=L^2(\Omega;H^3(B(x))).
\end{align}

\begin{remark}Following the lines of \cite{sebam_equadiff} and \cite{Show_walk}, Assumption (B1)  implies in particular that the measures $|\partial B(x)|$ and $|B(x)|$ are bounded away from zero (uniformly in $x$). Consequently, the following direct Hilbert integrals (cf. \cite{Dix} (part II, chapter 2), e.g.)
\begin{eqnarray}
L^2(\Omega;H^1(B(x)))&:=&\{u\in L^2(\Omega;L^2(B(x))): \nabla_y u\in L^2(\Omega;L^2(B(x)))\}\nonumber\\
L^2(\Omega;H^1(\partial B(x)))&:=& \{u:\Omega\times \partial B(x)\to \mathbb{R} \mbox{ measurable  such that }  \int_\Omega ||u(x)||^2_{L^2(\partial B(x))}<\infty\} \nonumber
\end{eqnarray}
are well-defined separable Hilbert spaces and, additionally, the {\em distributed trace}
$$\gamma: L^2(\Omega;H^1(B(x)))\to L^2(\Omega, L^2(\partial B(x)))$$ given by
\begin{equation}\label{g}
\gamma u(x,s):=(\gamma_x U(x))(s), \ x\in \Omega, s\in \partial B(x),u\in L^2(\Omega;H^1(B(x)))
\end{equation}
is a bounded linear operator.  For each fixed $x\in \Omega$, the map $\gamma_x$, which is arising in (\ref{g}), is the standard trace operator from $H^1(B(x))$ to $L^2(\partial B(x))$. We refer the reader to \cite{sebam_PhD} for more details on the construction of these spaces and to \cite{sf} for the definitions of their duals as well as for  a less regular condition (compared to (B1)) allowing to define these spaces in the context of a certain class of anisotropic Sobolev spaces.
\end{remark}

For convenience, we also introduce the evolution triple $(\mathbb{V},\mathbb{H},\mathbb{V}^*)$,
where
\begin{align}
&\mathbb{V}:=\{(\phi,\psi)\in V_1\times V_2\,|\,\phi(x)=\psi(x,y)\, \mbox{for}\,
x\in \Omega,\, y\in \partial B(x)\},\\
&\mathbb{H}:=H_1\times H_2,
\end{align}

\subsubsection{Analysis of microscopic equations}\label{analysis_micro}

\begin{definition}\label{def_weak_micro}
Assume (A1), (A2) and (B1). The pair $(u_\epsilon,v_\epsilon)$, with $u_\epsilon=U_\epsilon+u_b$ and
where $(U_\epsilon,v_\epsilon)\in H^2(\Omega_h^\epsilon;\partial \Omega)\times H^2(\Omega_l^\epsilon)$, is a weak solution of the problem $(P_\epsilon)$
if the transmission conditions (\ref{transmission}) are fulfilled and the following identities hold
\begin{align}
&\int_{\Omega_h^\epsilon}\partial_t(U_\epsilon+u_b)\phi\, dx+\int_{\Omega_h^\epsilon}
(D_h\nabla (U_\epsilon+u_b)-q_\epsilon(U_\epsilon+u_b))\cdot \nabla \phi \, dx=\nonumber\\
&\hspace{8cm}-\int_{\Gamma_\epsilon} \nu^\epsilon\cdot (\epsilon^2D_l\nabla v_\epsilon)\phi ds,\\
&\int_{\Omega_l^\epsilon} \partial_t v_\epsilon \psi \,dydx +
\int_{\Omega_l^\epsilon}  \epsilon^2 D_l \nabla v_\epsilon\cdot \nabla \psi \,dx =
\int_{\partial \Gamma_\epsilon}\nu^\epsilon \cdot (\epsilon^2D_l\nabla v_\epsilon)\psi \, ds
dx,
\end{align}
for all $(\phi,\psi)\in H^1(\Omega_h^\epsilon;\partial \Omega)\times H^1(\Omega_l^\epsilon)$ and $t\in I$.
\end{definition}
\begin{theorem}\label{main_result_micro} Assume (A1), (A2) and (B1).
Problem ($P_\epsilon$) admits  a unique global-in-time weak solution in the
sense of Definition \ref{def_weak_micro}.
\end{theorem}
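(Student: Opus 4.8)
The plan is to establish existence and uniqueness via the standard Galerkin method adapted to the transmission structure, working in the evolution-triple setting. First I would reformulate the coupled system \eqref{dimlesseq1}--\eqref{dimlesseq3} as a single weak problem on the product space $H^1(\Omega_h^\epsilon;\partial\Omega)\times H^1(\Omega_l^\epsilon)$, noting that the transmission flux condition \eqref{transmission} is what glues the two boundary integrals together: when one adds the two weak identities in Definition \ref{def_weak_micro} and tests with an admissible pair $(\phi,\psi)$ satisfying $\phi=\psi$ on $\Gamma^\epsilon$, the boundary terms $\int_{\Gamma^\epsilon}\nu^\epsilon\cdot(\epsilon^2 D_l\nabla v_\epsilon)\phi\,ds$ cancel by the continuity of the normal flux. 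This yields a single bilinear form $a(\cdot,\cdot)$ plus the advection term, and the homogeneous Dirichlet datum is handled by the substitution $u_\epsilon=U_\epsilon+u_b$, which is already built into the definition and moves the boundary data into a forcing term using $u_b\in H^1(I;H^3(\Gamma))$ from (A1).

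Next I would verify the hypotheses of the standard parabolic existence theorem (Lions--Galerkin, or equivalently the theory of evolution triples with a Gelfand triple $\mathbb{V}\hookrightarrow\mathbb{H}\hookrightarrow\mathbb{V}^*$). The key structural facts are: (i) \emph{coercivity} of the diffusion part, which holds because $D_h,D_l\in\,]0,\infty[$ are strictly positive, so the bilinear form controls $\|\nabla U_\epsilon\|^2_{L^2(\Omega_h^\epsilon)}+\epsilon^2\|\nabla v_\epsilon\|^2_{L^2(\Omega_l^\epsilon)}$ from below; (ii) \emph{boundedness} of $a$, immediate from $D_h,D_l<\infty$; and (iii) control of the advection term $\int_{\Omega_h^\epsilon}q_\epsilon(U_\epsilon+u_b)\cdot\nabla\phi\,dx$, which is where (A2) enters decisively. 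Because $\nabla\cdot q_\epsilon=0$ a.e.\ and $q_\epsilon\in L^\infty$, the advection operator is skew-symmetric after integration by parts (the diagonal term $\int q_\epsilon U_\epsilon\cdot\nabla U_\epsilon\,dx$ vanishes), so it does not destroy coercivity and can be absorbed into a G\r{a}rding-type inequality. I would then run the Galerkin approximation on finite-dimensional subspaces of $\mathbb{V}$, derive the energy estimate by testing with the solution itself, pass to the limit using weak and weak-$*$ compactness together with the Aubin--Lions lemma for the time-derivative compactness, and recover the weak formulation in the limit.

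Uniqueness would follow by linearity: the difference of two solutions solves the homogeneous problem with zero initial and boundary data, and testing with the difference itself and applying Gr\"onwall's inequality forces it to vanish. The regularity claim, namely that $(U_\epsilon,v_\epsilon)\in H^2(\Omega_h^\epsilon;\partial\Omega)\times H^2(\Omega_l^\epsilon)$ as required by Definition \ref{def_weak_micro}, I would obtain by elliptic regularity applied at each time slice, using the $C^2$ regularity of the interface $\Gamma^\epsilon$ guaranteed by (B1) (which ensures $S\in C^2$ and hence the normal $\nu^\epsilon$ is well-defined and smooth) together with the $H^1$ regularity of the initial data in (A1). The main obstacle I anticipate is not the abstract parabolic machinery but rather the careful handling of the $\epsilon$-dependent transmission interface: one must check that the trace and flux couplings across $\Gamma^\epsilon$ are well-defined and that the cancellation of interface integrals is rigorous, which relies essentially on the regularity of the level-set function in (B1) and on the correct function-space setting for the distributed-interface traces. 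Since this well-posedness result is for fixed $\epsilon$, no uniformity of constants in $\epsilon$ is required here, which substantially simplifies the argument compared to the corrector estimate of Theorem \ref{MR}.
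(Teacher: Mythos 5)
Your proposal is correct and is essentially the paper's own proof in expanded form: the paper disposes of this theorem in one line, observing that for a linear transmission problem well-posedness follows by ``standard techniques (see \cite{Evans})'', and the Galerkin/energy-estimate/Gr\"onwall route you lay out, with the interface fluxes cancelling for test pairs matching on $\Gamma^\epsilon$, is precisely that standard machinery. One small wrinkle: the advection term is not exactly skew-symmetric, since integrating $\int_{\Omega_h^\epsilon}q_\epsilon U_\epsilon\cdot\nabla U_\epsilon\,dx$ by parts leaves the interface contribution $\tfrac12\int_{\Gamma^\epsilon}(q_\epsilon\cdot\nu^\epsilon)U_\epsilon^2\,ds$ (no impermeability of $\Gamma^\epsilon$ is assumed in (A2)), but this is harmless because a trace inequality plus Young's inequality absorbs it into the G\r{a}rding estimate you already invoke, $\epsilon$ being fixed here.
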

\begin{proof} Since we deal here with a linear transmission problem, the proof of the Theorem can be done with  standard techniques  (see \cite{Evans}, e.g.).
\end{proof}
%\begin{lemma} Assume (A1) and (A2).
%\end{lemma}

\subsubsection{Analysis of two-scale equations}\label{analysis_macro}

This section contains basic results concerning the well-posedness of the two-scale problem, which we reformulate here as:
\begin{align*}
(P)\begin{cases}
\theta(x)\partial_tu_0-\nabla_x\cdot({\cal D}(x)\nabla_x u_0 -q u_0)=-\int_{\partial B(x)}
\nu_0\cdot(D_l\nabla_y v_0)\,d\sigma & \mbox{in}\,\, \Omega,\\
\partial_tv_0-D_l \Delta_y v_0=0 & \mbox{in}\,\, B(x)%:=\{y\in \mathbb{R}^d\,|\,|y|< r(x)\}
,\\
u_0(x,t)=v_0(x,y,t)& \mbox{at}\,\, (x,y)\in \Omega\times \partial B(x),\\
u_0(x,t)=u_b(x,t) & \mbox{at}\,\, x\in \partial \Omega,\\
u_0(x,0)=u_I(x) & \mbox{in}\,\,\overline{\Omega},\\
v_0(x,y,0)=v_I(x,y) & \mbox{at}\,\,(x,y)\in \overline{\Omega}\times\overline{B(x)}.
\end{cases}
\end{align*}
%where $D(x):=\int_{B(x)}(I+\nabla_y U(x,y))\,dy$.

Before starting to discuss the existence and uniqueness of weak solutions to problem $(P)$, we denote $U:=u-u_b$ and notice that $U=0$ at $\partial \Omega$.

\begin{definition}\label{def_weak}
Assume (B1) and (B2). The pair $(u,v)$, with $u=U+u_b$
where $(U,v)\in\mathbb{V}$, is a weak solution of the problem $(P)$
if the following identities hold
\begin{align}
&\int_\Omega \theta \partial_t(U+u_b)\phi\, dx+\int_\Omega
(D\nabla_x(U+u_b)-q(U+u_b))\cdot \nabla_x \phi \, dx=\nonumber\\
&\hspace{8cm}-\int_\Omega \int_{\partial B(x)}\nu_0\cdot (D_l\nabla_y v)\phi \, d\sigma
dx,\\
&\int_{\Omega} \int _{B(x)}\partial_t v \psi \,dydx +
\int_{\Omega} \int _{B(x)} D_l \nabla_y \cdot \nabla_y \psi \,dydx =
\int_\Omega \int_{\partial B(x)}\nu_0\cdot (D_l\nabla_y v)\phi \, d\sigma
dx,
\end{align}
for all $(\phi,\psi)\in \mathbb{V}$ and $t\in I$.
\end{definition}

\begin{proposition}[Uniqueness] Assume (B1) and (B2). Problem ($P$) admits at most one weak solution in the sense of Definition \ref{def_weak}.
\end{proposition}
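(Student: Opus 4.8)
The plan is to prove uniqueness by the standard energy method: assuming two weak solutions $(u^{(1)},v^{(1)})$ and $(u^{(2)},v^{(2)})$ of problem $(P)$ sharing the same data, I would form their difference $(U,v):=(u^{(1)}-u^{(2)},v^{(1)}-v^{(2)})$. Since the data $u_b$, $u_I$, $v_I$ cancel, the pair $(U,v)$ solves the \emph{homogeneous} version of $(P)$: it lies in $\mathbb{V}$ (so in particular the transmission compatibility $U(x,t)=v(x,y,t)$ holds for $y\in\partial B(x)$), vanishes on $\partial\Omega$, has zero initial data, and satisfies the two weak identities of Definition \ref{def_weak} with the $u_b$-terms removed. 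The goal is then to show $(U,v)\equiv 0$ on $I$.

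The key step is to choose the admissible test function $(\phi,\psi)=(U(\cdot,t),v(\cdot,\cdot,t))\in\mathbb{V}$ and add the two resulting identities. Because $\phi=U$ equals $\psi=v$ on the interface $\partial B(x)$ by the definition of $\mathbb{V}$, the two interface boundary integrals $\int_\Omega\int_{\partial B(x)}\nu_0\cdot(D_l\nabla_y v)\,\phi\,d\sigma\,dx$ appearing on the right-hand sides are identical and cancel exactly when the equations are summed. This is the whole point of building the interface matching into the space $\mathbb{V}$. What survives is, schematically,
\begin{align*}
\int_\Omega \theta\,\partial_t U\,U\,dx+\int_\Omega\int_{B(x)}\partial_t v\,v\,dy\,dx
+\int_\Omega D\nabla_x U\cdot\nabla_x U\,dx+\int_\Omega\int_{B(x)}D_l|\nabla_y v|^2\,dy\,dx
=\int_\Omega (qU)\cdot\nabla_x U\,dx.
\end{align*}
The first two terms are recognized as $\tfrac12\frac{d}{dt}\big(\|\sqrt{\theta}\,U\|_{H_1}^2+\|v\|_{H_2}^2\big)$, the next two are nonnegative coercive (diffusion) terms, and the convective term on the right must be controlled. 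Using $\nabla\cdot q=0$ from (B2) together with $U=0$ on $\partial\Omega$, I would integrate by parts to rewrite $\int_\Omega (qU)\cdot\nabla_x U\,dx=\tfrac12\int_\Omega q\cdot\nabla_x(U^2)\,dx=-\tfrac12\int_\Omega(\nabla\cdot q)U^2\,dx=0$, so the convection contributes nothing; alternatively, if that cancellation is not exploited, one bounds it by $\|q\|_{L^\infty}\|U\|_{L^2}\|\nabla_x U\|_{L^2}$ and absorbs the gradient into the coercive term via Young's inequality, paying only a lower-order $\|U\|_{L^2}^2$ penalty.

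Discarding the nonnegative diffusion terms then yields the differential inequality $\frac{d}{dt}E(t)\le C\,E(t)$, where $E(t):=\|\sqrt{\theta}\,U(t)\|_{H_1}^2+\|v(t)\|_{H_2}^2$ and $\theta$ is bounded below away from zero by (B2) (recall $\theta=|Y(x)|$ and the inclusions are strictly interior by the sign conditions on $S$). Since $E(0)=0$ from the vanishing initial data, Gronwall's inequality forces $E(t)\equiv 0$ on $I$, hence $U\equiv 0$ and $v\equiv 0$, giving uniqueness. The main obstacle I anticipate is technical rather than conceptual: justifying that $(U(\cdot,t),v(\cdot,t))$ is genuinely an admissible test function in $\mathbb{V}$ for a.e.\ $t$ and that the pairing $\langle\partial_t U,U\rangle$ may legitimately be written as $\tfrac12\frac{d}{dt}\|U\|^2$. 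This requires the evolution-triple framework $(\mathbb{V},\mathbb{H},\mathbb{V}^*)$ introduced above together with the Lions–Magenes regularity $\partial_t(U,v)\in L^2(I;\mathbb{V}^*)$, so that the integration-by-parts-in-time formula applies to the direct-integral spaces $H_2=L^2(\Omega;L^2(B(x)))$ and $V_2$; the well-definedness of these direct Hilbert integrals and of the distributed trace $\gamma$ (guaranteed by (B1) and the cited constructions) is exactly what makes the interface cancellation rigorous.
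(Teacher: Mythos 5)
Your proposal is correct and takes essentially the same route as the paper's own proof, which is only a two-sentence sketch of exactly this argument: by linearity, take the difference of two weak solutions, test with that difference in the $x$-dependent function spaces, and conclude with Gronwall's inequality. Your write-up simply fills in the details the paper leaves implicit — the interface cancellation encoded in $\mathbb{V}$, the handling of the divergence-free convection term, and the evolution-triple justification for $\langle\partial_t U,U\rangle=\tfrac12\frac{d}{dt}\|U\|^2$ — so there is no gap and no genuine divergence in method.
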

\begin{proof}
Since Problem ($P$) is linear, the uniqueness follows in the
standard way and can be done directly in the $x$-dependent function
spaces: One takes two different weak solutions to ($P$) satisfying
the same initial data. Testing with their difference and using the
Gronwall's inequality conclude the proof.
\end{proof}
\begin{theorem}\label{main_result_macro}
Assume (B1) and (B2). Problem (P) admits at least a global-in-time weak solution in the
sense of Definition \ref{def_weak}.
\end{theorem}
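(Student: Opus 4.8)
The plan is to establish existence of a global weak solution to problem $(P)$ by a Galerkin approximation carried out in the $x$-dependent function spaces, followed by energy estimates and a compactness/passage-to-the-limit argument. First I would reduce to homogeneous boundary data by working with $U=u-u_b$, so that $(U,v)\in\mathbb{V}$ and the Dirichlet condition at $\partial\Omega$ is absorbed into the test space; the contributions of $u_b$ and its time derivative $\partial_t u_b$ then appear as inhomogeneous source terms on the right-hand side, which are controlled by (B2) (in particular $u_b\in H^1(I;H^3(\Gamma))$). Since the two components are coupled only through the matching condition $u_0=v_0$ on $\partial B(x)$, which is precisely the constraint built into $\mathbb{V}$, the Galerkin basis must be chosen to respect this coupling: I would select a countable family $\{(\phi_k,\psi_k)\}\subset\mathbb{V}$ whose finite-dimensional spans are dense in $\mathbb{V}$, and seek approximate solutions as finite linear combinations with time-dependent coefficients, giving a linear ODE system whose well-posedness is immediate.

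Next I would derive the \emph{a priori} energy estimates, which are the heart of the argument. Testing the weak formulation with $(U,v)$ itself and adding the two identities, the two surface integrals over $\partial B(x)$ involving $\nu_0\cdot(D_l\nabla_y v)$ cancel exactly — this cancellation is the structural reason the coupled system is dissipative. What remains is
\begin{align*}
\frac{1}{2}\frac{d}{dt}\left(\int_\Omega \theta\, U^2\,dx + \int_\Omega\int_{B(x)} v^2\,dy\,dx\right)
+ \int_\Omega D\,|\nabla_x U|^2\,dx + \int_\Omega\int_{B(x)} D_l\,|\nabla_y v|^2\,dy\,dx
\end{align*}
is bounded by terms coming from the convection $q\,U$, from $u_b$ and $\partial_t u_b$, and from the cross terms; using $\nabla\cdot q=0$ to eliminate the problematic part of the advection, the ellipticity of $\mathcal D$ and positivity of $\theta,D_l$ from (B2), together with Young's and Poincar\'e's inequalities, I would absorb the gradient terms and then apply Gronwall's inequality. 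This yields uniform bounds for the Galerkin approximants in $L^\infty(I;H_1\times H_2)\cap L^2(I;\mathbb{V})$, independent of the discretization level.

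I would then extract weakly(-*) convergent subsequences in these reflexive spaces and pass to the limit in the (linear) Galerkin identities; linearity means no compactness beyond weak convergence is strictly needed to identify the limits of the individual terms, once one also bounds the time derivatives $(\partial_t U,\partial_t v)$ in the dual space $\mathbb{V}^*$ to secure an Aubin--Lions type argument and to make sense of the initial data via continuity in $C(I;\mathbb{H})$. The limit pair $(U,v)$ then satisfies Definition \ref{def_weak}, and shifting back $u=U+u_b$ gives the weak solution of $(P)$; attainment of the initial conditions $u_0(\cdot,0)=u_I$ and $v_0(\cdot,\cdot,0)=v_I$ follows from the standard integration-by-parts-in-time identity together with the density of the Galerkin basis.

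The main obstacle I anticipate is the careful handling of the direct-integral (fiber) structure of the spaces $V_2=L^2(\Omega;H^3(B(x)))$ and $H_2=L^2(\Omega;L^2(B(x)))$, where the microscopic domain $B(x)$ varies measurably with $x$. One must verify that the distributed trace operator $\gamma$ is well-defined and bounded so that the coupling constraint defining $\mathbb{V}$ is closed, that a countable dense Galerkin basis compatible with this constraint actually exists, and that the surface integrals over the moving interface $\partial B(x)$ are measurable and estimable uniformly in $x$ — all of which rest on the regularity (B1) guaranteeing $|B(x)|$ and $|\partial B(x)|$ bounded away from zero (as noted in the remark following (B3)). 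Once this functional-analytic scaffolding is in place, the remaining steps are the routine energy method, so I would cite \cite{sebam_equadiff,sebam_PhD,Show_walk} for the construction of the fiber spaces and keep the estimates schematic.
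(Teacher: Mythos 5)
Your proposal cannot be checked line-by-line against the paper's own argument for a simple reason: the paper does not prove this theorem at all --- its entire proof is the citation ``See the proof of Theorem 5.11 in \cite{Tycho_Adrian}''. Measured against what such a proof must contain, your Galerkin/energy sketch is essentially correct and is the standard route for linear distributed-microstructure systems: the two interface integrals are equal and opposite once the test pair satisfies $\phi=\psi$ on $\partial B(x)$, so they cancel upon adding the two identities; $\nabla\cdot\bar q=0$ together with $U|_{\partial\Omega}=0$ annihilates the advection term; and, the problem being linear, weak convergence of the Galerkin approximants suffices to pass to the limit. The one genuine methodological divergence is that the cited companion proof works after \emph{fixing the boundary}, i.e.\ transforming the variable cells $B(x)$ to a fixed reference cell and carrying out the analysis in ordinary Sobolev spaces on a product domain with $x$-dependent coefficients (this is visible in the present paper's proof of Lemma \ref{addreg}, which speaks of ``the weak formulation posed in fixed domains''), whereas you stay in the direct-integral spaces $L^2(\Omega;H^1(B(x)))$ in the spirit of Showalter--Walkington. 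Both are viable: the transformation buys off-the-shelf parabolic existence theory at the price of messier coefficients; your route avoids coordinate changes but makes the measurable-field and distributed-trace scaffolding (countable dense basis of $\mathbb{V}$, boundedness of $\gamma$) load-bearing, which you correctly flag as the main obstacle.

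Two cautions so that your sketch actually closes. First, your energy identity controls only $\|\nabla_y v\|_{L^2(\Omega;L^2(B(x)))}$, i.e.\ $H^1$-regularity in the fibers, so the uniform bounds live in $L^2(I;H_0^1(\Omega))\times L^2(I;L^2(\Omega;H^1(B(x))))$ and \emph{not} in $L^2(I;\mathbb{V})$ with the paper's literal $V_2=L^2(\Omega;H^2(B(x)))$; you should set up the weak-solution space with $H^1$-fibers (as the Remark following (B3) in fact does when constructing the distributed trace) rather than claim compactness in the $H^2$-fiber space. Second, the constraint encoded in $\mathbb{V}$ is $U=v$ on $\partial B(x)$, not $u=v$; since you lift the boundary datum only in the first component, identifying the $\mathbb{V}$-constraint with the matching condition $u_0=v_0$ is not exact, and the lifted pair would violate the constraint your basis satisfies. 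Either shift both components by an extension of $u_b$, or prove existence for Definition \ref{def_weak} exactly as stated; this is an infelicity inherited from the paper's definition rather than a flaw in your scheme, but it must be resolved for the Galerkin ansatz to be well-defined.
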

\begin{proof} See the proof of Theorem 5.11 in \cite{Tycho_Adrian}.
\end{proof}
To get the correctors estimates stated in Theorem \ref{MR} we need more two-scale regularity for the macroscopic reconstruction of the concentration field $v_0$. We state this fact in the following result:

\begin{lemma}[Additional two-scale regularity] \label{addreg}
Assume (B1) and (B2). Then
\begin{equation}
v_0^\epsilon\in L^2(I;H^2(\Omega;H^2(B(x))))).
\end{equation}
\end{lemma}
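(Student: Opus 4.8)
The plan is to prove the claim for the two-scale reconstruction $v_0^\epsilon$ regarded as a genuine function of the two \emph{independent} variables $(x,y)\in\Omega\times B(x)$ (and $t\in I$), and not through the single-scale substitution $y=x/\epsilon$; all derivatives below are therefore the separate slow ($\partial_x$) and fast ($\partial_y$) derivatives, and the target $L^2(I;H^2(\Omega;H^2(B(x))))$ is a genuine two-scale Bochner--Sobolev norm. The starting point is that, for each fixed $x$, the function $v_0^\epsilon(x,\cdot,t)$ inherits from the limit cell problem in $(P)$ the structure of a homogeneous heat equation in $y$ on the cell $B(x)$ with Dirichlet datum equal to $u_0(x,t)$ on $\partial B(x)$. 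The first step is therefore to flatten the $x$- and $\epsilon$-dependent family of cells $B(x)$ onto a single reference cell $B_0$ by the diffeomorphism generated by the level-set function $S$; by (B1) this diffeomorphism and its inverse, together with all derivatives up to the required order, are bounded uniformly in $x$ and in $\epsilon$, so that after transformation the problem becomes a uniformly parabolic problem on the fixed smooth domain $B_0$ with $x$-dependent but uniformly bounded coefficients, and the claimed two-scale regularity is equivalent to $H^2$-in-$x$, $H^2$-in-$y$ regularity of the transformed function $\hat v$.

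For the $y$-regularity I would set $w:=\hat v-u_0$, which carries homogeneous Dirichlet data on $\partial B_0$ and solves a uniformly parabolic equation $\partial_t w=\mathcal{L}_x w-\partial_t u_0$ in $B_0$, where $\mathcal{L}_x$ is the second-order elliptic operator obtained from $D_l\Delta_y$ under the flattening, with coefficients controlled by (B1). Since $\partial B_0$ is smooth and $\mathcal{L}_x$ is uniformly elliptic, standard elliptic $H^2$-regularity applied for a.e.\ $(x,t)$, with $\partial_t\hat v$ treated as an $L^2$-source (finite by the evolution-triple solution theory of Theorem \ref{main_result_macro}), yields $\hat v(x,\cdot,t)\in H^2(B_0)$ together with an $L^2(I\times\Omega)$-bound governed by $\|u_0\|$ and $\|\partial_t\hat v\|$; this delivers the inner $H^2(B(x))$ factor.

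The $x$-regularity is obtained by differentiating the flattened equation once and twice in $x$. Because the domain is now fixed, $\partial_{x_i}$ commutes with $\mathcal{L}_x$ up to lower-order commutators whose coefficients are the $x$-derivatives of the (smooth, (B1)-controlled) transformation; the resulting forcing terms contain $y$-derivatives of $\hat v$ that are absorbed using the $y$-regularity of the previous step. The boundary data for $\partial_x^\alpha\hat v$ is $\partial_x^\alpha u_0$, which is controlled by the macroscopic regularity $u_0\in L^2(I;H^2(\Omega))$ coming from (B2)--(B3) (namely $\theta,D\in H^2(\Omega)$, $u_b\in H^1(I;H^3(\Gamma))$, $u_I\in H^2(\Omega)$, and the regularity theory for the macroscopic parabolic equation in $(P)$, with $V_1=H^2_\theta(\Omega)$). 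Energy estimates for $\partial_{x_i}\hat v$ and $\partial_{x_i}\partial_{x_j}\hat v$, combined with Gronwall in $t$, then close the bound in $L^2(I;H^2(\Omega;H^2(B_0)))$; transforming back via the uniformly bounded diffeomorphism gives the assertion with all constants independent of $\epsilon$.

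The main obstacle is the genuinely coupled nature of the two-scale differentiation over the moving cells $B(x)$: the commutators produced when $\partial_x$ meets the $y$-elliptic operator feed $y$-derivatives of $\hat v$ back into the $x$-estimates, so the slow and fast regularities cannot be established independently but must be bootstrapped together. Keeping every constant uniform in $\epsilon$ hinges entirely on the uniform (in $x$ and $\epsilon$) $C^2$-bounds on the transformation furnished by (B1), while the matching of the Dirichlet datum $v_0^\epsilon=u_0$ across $\partial B(x)$ relies on the distributed-trace and two-scale regularity framework fixed in (B3).
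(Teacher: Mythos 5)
Your proposal is correct and follows essentially the same route as the paper's (very terse) proof, whose stated ingredients are exactly your two steps: ``fixing of the boundary'' --- your flattening of the $x$-dependent cells $B(x)$ onto a reference cell via the (B1)-regular level set --- followed by a regularity lift in the resulting fixed domain, which the paper performs by testing with difference quotients in the weak formulation and you perform by differentiating the equation in $x$ (the difference-quotient version being the rigorous implementation of that differentiation, since one cannot a priori differentiate the weak solution). The one blemish is your appeal to (B3): the lemma is stated under (B1)--(B2) only, and the macroscopic regularity $u_0\in L^2(I;H^2(\Omega))$ needed as boundary data for $\partial_x^\alpha \hat v$ already follows from (B2) (namely $\theta,\,D\in H^2(\Omega)$, $u_I\in H^2(\Omega)$, $u_b\in H^1(I;H^3(\Gamma))$) together with the fact that the coupling term $\int_{\partial B(x)}\nu_0\cdot(D_l\nabla_y v_0)\,d\sigma$ is in $L^2(\Omega)$ once $v_0\in L^2(\Omega;H^2(B(x)))$, so (B3) should not be cited.
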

\begin{proof}
The proof of this result is similar to the regularity lift proven in Claim 5.10 in \cite{Tycho_Adrian}. The main ingredients are fixing of the boundary and testing with difference quotients  in the weak formulation posed in fixed domains. We omit to show the details.
\end{proof}
\begin{theorem}[On strong solutions] \label{strongsol}
Assume (B1), (B2), and (B3). Problem (P) admits one  global-in-time strong solution.
\end{theorem}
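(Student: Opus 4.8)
The plan is to upgrade the weak solution furnished by Theorem \ref{main_result_macro} to a strong one by a parabolic regularity bootstrap carried out in the more regular functional framework prescribed by (B3). Since the Uniqueness Proposition already guarantees at most one weak solution, and every strong solution is in particular weak, I only need to produce the additional regularity and then read off that the equations of $(P)$ hold pointwise almost everywhere. First I would homogenize both Dirichlet conditions: set $U_0:=u_0-u_b$ to annihilate the macroscopic datum on $\Gamma$ (as already anticipated in Definition \ref{def_weak}), and introduce a smooth $x$-family of extensions of $u_0(x,t)$ into the cells $B(x)$ so that the interface condition $v_0=u_0$ on $\partial B(x)$ becomes a homogeneous condition for the shifted micro-unknown. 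The data regularity collected in (B2) (namely $\theta,D\in H^2(\Omega)$, $u_I\in H^2(\Omega)$, $u_b\in H^1(I;H^3(\Gamma))$, $v_I\in L^2(\Omega;H^2(B(x)))$) together with the sign condition $\partial_t u_b\le 0$ are exactly what is needed to make these lifts admissible and compatible at $t=0$.

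The core step is to obtain an a priori bound on the time derivatives at the energy level, i.e. to show $\partial_t u_0\in L^2(I;H^1_0(\Omega))\cap L^\infty(I;L^2_\theta(\Omega))$ and $\partial_t v_0\in L^2(I;L^2(\Omega;H^1(B(x))))\cap L^\infty(I;L^2(\Omega;L^2(B(x))))$. I would obtain this by testing the identities of Definition \ref{def_weak} with time-difference quotients of the solution and passing to the limit, following the scheme already used for Lemma \ref{addreg} and in \cite{Tycho_Adrian}. The crucial algebraic feature is that the nonlocal interface flux $\int_\Omega\int_{\partial B(x)}\nu_0\cdot(D_l\nabla_y v_0)\,\phi\,d\sigma\,dx$ enters the macroscopic and microscopic identities with opposite signs and cancels precisely because every admissible test pair $(\phi,\psi)\in\mathbb{V}$ satisfies $\phi=\psi$ on $\partial B(x)$; one must check that this cancellation survives for the difference quotients, which amounts to verifying that the discrete time increments of $(u_0,v_0)$ themselves remain in the constrained space $\mathbb{V}$.

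With the time derivatives now controlled in $L^2$, I would freeze $t$ and apply second-order elliptic regularity in each variable separately. For the cell problem, $-D_l\Delta_y v_0=-\partial_t v_0\in L^2(B(x))$ with Dirichlet datum $u_0(x,t)$ on the $C^2$ boundary $\partial B(x)$ (smoothness inherited from (B1)); standard elliptic estimates give $v_0(x,\cdot,t)\in H^2(B(x))$, and a further lift exploiting the regularity of the boundary datum yields the $H^3(B(x))$ bound demanded by $V_2$ in (B3). The constants remain uniform in $x$ thanks to the uniform $C^2$ control of $S$ and the uniform lower bounds on $|B(x)|$ and $|\partial B(x)|$ recorded in the remark following (B3), so the cellwise estimates integrate over $\Omega$ to give membership in the direct Hilbert integrals of \cite{sebam_equadiff, Show_walk}. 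For the macroscopic equation, $-\nabla_x\cdot(\mathcal{D}\nabla_x u_0)\in L^2(\Omega)$ (its right-hand side being $-\theta\partial_t u_0+\nabla_x\cdot(q u_0)$ minus the now-regular interface flux), with $H^2$ coefficients and $H^3(\Gamma)$ boundary data, whence elliptic regularity places $u_0\in H^2_\theta(\Omega)=V_1$. This shows that $(u_0,v_0)$ lives in the strong-solution space attached to the (B3) evolution triple, so that both equations of $(P)$ are satisfied almost everywhere; existence of a strong solution then follows, and uniqueness is inherited from the weak theory (cf.\ \cite{Evans}).

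The principal obstacle is the genuine two-scale coupling: the microscopic flux feeds the macroscopic source while the macroscopic trace drives the microscopic boundary value, so the two regularity gains cannot be obtained by a one-directional bootstrap but must be closed \emph{simultaneously} through the flux cancellation described above. A second, more technical difficulty is that the cell domain $B(x)$ itself depends on $x$, so differentiating (or taking difference quotients) in $x$ moves the domain; handling this requires pulling $B(x)$ back to a fixed reference cell by an $x$-family of $C^2$ diffeomorphisms built from the level-set $S$, which generates additional lower-order terms with coefficients involving $\nabla_x S$ and $\nabla_x^2 S$ — all controlled by (B1) — and one must confirm that these perturbations do not spoil the uniform-in-$x$ ellipticity constants needed to integrate the cellwise bounds over $\Omega$.
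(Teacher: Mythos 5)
Your proposal is correct and follows essentially the same route as the paper's own (very condensed) proof: start from the weak solution of Theorem \ref{main_result_macro}, then perform a parabolic regularity bootstrap in the (B3) framework --- time-difference quotients to control $\partial_t u_0$ and $\partial_t v_0$, elliptic regularity in each variable, and a pull-back of the $x$-dependent cells $B(x)$ to a fixed reference domain --- which is exactly the ``fixing of the boundary and testing with difference quotients'' strategy the paper invokes via Lemma \ref{addreg}, \cite{Tycho_Adrian}, and the calculation in \cite{Evans} (Theorem 5, pp.\ 360--364). In fact, your write-up supplies considerably more detail (the flux-cancellation for difference quotients in $\mathbb{V}$, the uniform-in-$x$ ellipticity constants) than the paper, which omits the proof entirely.
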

\begin{proof}
Under the assumptions (B1) and (B2), Theorem \ref{main_result_macro} guarantees the existence of global-in-time weak solutions. Relying on the assumption (B3), we can lift the regularity until getting a strong solution. A similar calculation is done in  \cite{Evans} (Theorem 5, pp. 360--364). In particular, (B3) allows for a regularity lift such that
$$||\nabla \partial_t u_0||_{L^2(I\times\Omega)}+||\nabla \partial_t v_0||_{L^2(I\times\Omega\times Y)}\leq c.$$
For our purpose, we only need
\begin{eqnarray}
||\nabla \partial_t u_0^\epsilon||_{L^2(I\times\Omega_h^\epsilon)}\leq c,\\
||u_0^\epsilon||_{L^2(I;H^3(\Omega_h^\epsilon))}\leq c\label{regi}
\end{eqnarray}
where $u_0^\epsilon$ is the macroscopic reconstruction of $u_0$. Quite probably (\ref{regi}) could be relaxed to $||u_0^\epsilon||_{L^2(I;H^{2+\theta}(\Omega_h^\epsilon))}$ for some $\theta>0$, but we don't address here this issue.  Since we rather wish that the reader focusses on our strategy of getting the correctors,  we omit to show the proof details for this regularity result.
\end{proof}

\section{Proof of Theorem \ref{MR}}\label{Proofs}
In this section, we give the proof of the main result of our paper,
i.e. of Theorem \ref{MR}. The proof uses the auxiliary results
stated in the lemmas below. They mainly concern integral estimates for rapidly oscillating functions with prescribed average; Related estimates can be found, for instance, in  \cite{Chechkin} and Section 1.5 in \cite{CPS}.
%Lemma \ref{lem1} and Lemma \ref{lem2} below.
\begin{lemma}\label{lem1} Assume the hypothesis of Theorem \ref{MR}
to hold. Then
\begin{align*}
\int_{Y(x)}\Big(\nabla_x\cdot((I+\nabla_yM)\nabla_xu_0)-\frac{1}{\theta(x)}\nabla_x\cdot\int_{Y(x)}(I+\nabla_yM)\nabla_xu_0\,dy\Big)\,dy=\\
-\int_{\partial B(x)}\nu_1\cdot(I+\nabla_yM)\nabla_xu_0\,d\sigma.
\end{align*}
\end{lemma}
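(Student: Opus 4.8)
The plan is to collapse the left-hand side into a single integral over $\partial B(x)$ by means of a transport (Leibniz) identity for integrals over the $x$-dependent domain $Y(x)$, and then to identify its integrand with the right-hand side using the boundary condition of the cell problem (\ref{cellpu}). Throughout I write $F(x,y):=(I+\nabla_y M(x,y))\nabla_x u_0(x,t)$ and $\bar F(x):=\int_{Y(x)}F(x,y)\,dy$, noting that $\nabla_x u_0$ is independent of $y$. Since the second term in the integrand on the left is independent of $y$ and $|Y(x)|=\theta(x)$, integrating it over $Y(x)$ simply returns $\nabla_x\cdot\bar F$; hence the left-hand side equals $\int_{Y(x)}\nabla_x\cdot F\,dy-\nabla_x\cdot\bar F$.

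First I would establish the transport formula. Each interface $\partial B(x)=\{y:\,S(x,y)=0\}$ is a level set that moves with $x$, whereas the outer cell boundary $\partial{\cal U}$ is fixed; by assumption (B1) the inclusion stays away from $\partial{\cal U}$ and, by $y$-periodicity, the contributions on $\partial{\cal U}$ cancel. Differentiating $S(x,y(x_k))=0$ along the moving boundary yields the normal velocity $\partial_{x_k}S/|\nabla_y S|$, the outward normal to $Y(x)$ on $\partial B(x)$ being $-\nu_0$. The Reynolds transport theorem then gives, for each $k$,
$$\partial_{x_k}\int_{Y(x)}F_k\,dy=\int_{Y(x)}\partial_{x_k}F_k\,dy+\int_{\partial B(x)}F_k\,\frac{\partial_{x_k}S}{|\nabla_y S|}\,d\sigma,$$
and summing over $k$ produces $\nabla_x\cdot\bar F=\int_{Y(x)}\nabla_x\cdot F\,dy+\int_{\partial B(x)}F\cdot\frac{\nabla_x S}{|\nabla_y S|}\,d\sigma$. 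Substituting this into the expression above, the two bulk integrals cancel and the left-hand side collapses to $-\int_{\partial B(x)}F\cdot\frac{\nabla_x S}{|\nabla_y S|}\,d\sigma$.

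It then remains to replace $\nabla_x S/|\nabla_y S|$ by $\nu_1$ inside this boundary integral. The key observation is that $F$ is tangential on $\partial B(x)$: using the cell-problem boundary condition $\nu_0\cdot\nabla_y M_j=-\nu_0\cdot e_j$, a componentwise computation gives $\nu_0\cdot(I+\nabla_y M)\nabla_x u_0=\sum_j(\partial_{x_j}u_0)\big((\nu_0)_j+\nu_0\cdot\nabla_y M_j\big)=0$, that is $\nu_0\cdot F=0$ on $\partial B(x)$. On the other hand, the definition (\ref{nu11}) of $\nu_1$ is precisely the decomposition $\nabla_x S/|\nabla_y S|=\nu_1+\big((\nabla_x S\cdot\nabla_y S)/|\nabla_y S|^2\big)\nu_0$ of that vector into its $\tau_0$- and $\nu_0$-components. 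Since $F\cdot\nu_0=0$, only the tangential part survives in the boundary integrand, so $F\cdot(\nabla_x S/|\nabla_y S|)=F\cdot\nu_1$, and the left-hand side equals $-\int_{\partial B(x)}\nu_1\cdot(I+\nabla_y M)\nabla_x u_0\,d\sigma$, which is the claimed identity.

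The main obstacle is the transport step: obtaining the boundary velocity of the level set $\{S(x,\cdot)=0\}$ with the correct sign, and verifying that the fixed outer boundary $\partial{\cal U}$ contributes nothing (both handled by (B1) together with $y$-periodicity). The regularity needed to differentiate under the integral sign and to make the trace of $F$ on $\partial B(x)$ meaningful is supplied by (B1) and the solvability of (\ref{cellpu}); the remaining cancellations, including $\nu_0\cdot F=0$, are then purely algebraic.
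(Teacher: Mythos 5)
Your proposal is correct and takes essentially the same route as the paper's own proof: the same reduction of the left-hand side using $|Y(x)|=\theta(x)$, the same Reynolds/Leibniz transport identity over the $x$-dependent domain $Y(x)$ producing the boundary term $\int_{\partial B(x)}F\cdot\frac{\nabla_x S}{|\nabla_y S|}\,d\sigma$, and the same use of the cell-problem boundary condition from (\ref{cellpu}) together with the definition (\ref{nu11}) of $\nu_1$ to identify the boundary integrand. The only cosmetic difference is that you phrase the final step as ``$F$ is tangential, so only the $\tau_0$-component of $\nabla_x S/|\nabla_y S|$ survives,'' whereas the paper subtracts the vanishing normal term $\frac{\nabla_xS\cdot\nabla_yS}{|\nabla_yS|^2}\,\nu_0\cdot(I+\nabla_yM)\nabla_xu_0$ directly; this is the same algebra.
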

\begin{proof}
We compute
\begin{align*}
\int_{Y(x)}\Big(\nabla_x\cdot((I+\nabla_yM)&\nabla_xu_0)-\frac{1}{\theta(x)}\nabla_x\cdot\int_{Y(x)}(I+\nabla_yM)\nabla_xu_0\,dy\Big)\,dy=\\
&\int_{Y(x)}\nabla_x\cdot((I+\nabla_yM)\nabla_xu_0\,dy-\nabla_x\cdot\int_{Y(x)}(I+\nabla_yM)\nabla_xu_0\,dy.
\end{align*}
Reynolds's transport theorem (see for instance \cite{Gurtin}) gives
\begin{align*}
\nabla_x\cdot\int_{Y(x)}(I+\nabla_yM)\nabla_xu_0\,dy=\int_{Y(x)}\nabla_x\cdot((I+\nabla_yM)\nabla_xu_0)\,dy\\
+\int_{\partial B(x)}\frac{\nabla_x S}{|\nabla_yS|}(I+\nabla_y M)\nabla_xu_0\,d\sigma.
\end{align*}
By the boundary condition in \eqref{cellpu}, we have
\begin{align*}
\nu_0(I+\nabla_yM)=\frac{\nabla_yS}{|\nabla_yS|}(I+\nabla_yM)=0\,\,\,\mbox{on}\,\,\,\partial B(x),
\end{align*}
so that we can write, using \eqref{nu11},
\begin{align*}
\frac{\nabla_x S}{|\nabla_yS|}(I+\nabla_y M)\nabla_xu_0&=
\Bigg(
\frac{\nabla_x S}{|\nabla_yS|}-
\frac{\nabla_xS\cdot\nabla_yS}{|\nabla_yS|^2}
\frac{\nabla_yS}{|\nabla_yS|}
\Bigg)(I+\nabla_y M)\nabla_xu_0\\
&=\nu_1\cdot(I+\nabla_yM)\nabla_xu_0
\end{align*}
on $\partial B(x)$.
Combining the above expressions proves the conclusion of this lemma.
\end{proof}

\begin{lemma} \label{lem2} Assume the hypothesis of Theorem \ref{MR}
to hold. Let $Q(x,y)\in L^2(\Omega;L^2(B(x)))$ and $p\in
 L^2(\Omega;L^2(\partial B(x)))$. Furthermore, suppose that
$\int_{Y(x)}Q(x,y)\,dy=\int_{\partial Y(x)}p(x,y)\,d\sigma$. Then
the inequality
\begin{align*}
\left|\int_{\Omega^\epsilon_h}Q(x,x/\epsilon)\phi(x)\, dx-\epsilon\int_{\Gamma^\epsilon} p(x,x/\epsilon)\phi(x)\, ds \right|\leq C\epsilon\|\phi\|_{H^1(\Omega^\epsilon_h)}
\end{align*}
holds for every $\phi\in H^1(\Omega^\epsilon_h;\partial\Omega)$. The
constant $C$ does not depend on the choice of $\epsilon$.
\end{lemma}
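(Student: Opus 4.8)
The plan is to realize the oscillating quantity $Q(x,x/\epsilon)$ as (essentially) the divergence of a bounded, fast-oscillating vector field. Then a single integration by parts trades the factor $\epsilon^{-1}$ produced by the fast variable against one power of $\epsilon$, and the boundary integral generated in the process reproduces \emph{exactly} the term $\epsilon\int_{\Gamma^\epsilon}p(x,x/\epsilon)\phi\,ds$ subtracted on the left-hand side. The prescribed-average hypothesis $\int_{Y(x)}Q\,dy=\int_{\partial B(x)}p\,d\sigma$ is precisely the solvability (compatibility) condition for the auxiliary cell problem that produces this field, so it is what makes the whole construction legitimate.

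Concretely, for a.e.\ $x\in\Omega$ I would solve the $x$-dependent Neumann cell problem
\[
\Delta_y w(x,y)=Q(x,y)\ \text{in }Y(x),\quad \nabla_y w\cdot\nu_0=-p(x,y)\ \text{on }\partial B(x),\quad w(x,\cdot)\ \text{periodic},
\]
and set $W:=\nabla_y w$, so that $\nabla_y\cdot W=Q$ in $Y(x)$ and $W\cdot\nu_0=-p$ on $\partial B(x)$. By the divergence theorem the solvability condition is exactly the stated average identity, and uniform ellipticity together with (B1) (which keeps $Y(x)$, $|B(x)|$ and $|\partial B(x)|$ uniformly nondegenerate in $x$) yields an $x$-uniform energy bound $\|W(x,\cdot)\|_{L^2(Y(x))}\le C(\|Q(x,\cdot)\|_{L^2(Y(x))}+\|p(x,\cdot)\|_{L^2(\partial B(x))})$.

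Writing $W^\epsilon(x):=W(x,x/\epsilon)$ and using the chain rule $\nabla_x\cdot W^\epsilon=(\nabla_x\cdot W)(x,x/\epsilon)+\epsilon^{-1}(\nabla_y\cdot W)(x,x/\epsilon)$, I would substitute $Q(x,x/\epsilon)=\epsilon\,\nabla_x\cdot W^\epsilon-\epsilon\,(\nabla_x\cdot W)(x,x/\epsilon)$ and integrate by parts over $\Omega_h^\epsilon$. The outer boundary term vanishes since $\phi=0$ on $\Gamma$, while on $\Gamma^\epsilon$ the outward normal to $\Omega_h^\epsilon$ equals $-\nu^\epsilon=-\nu_0-\epsilon\nu_1+O(\epsilon^2)$, so that $W^\epsilon\cdot n=p(x,x/\epsilon)-\epsilon\,W^\epsilon\cdot\nu_1+O(\epsilon^2)$ by the boundary condition $W\cdot\nu_0=-p$. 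The leading part cancels $\epsilon\int_{\Gamma^\epsilon}p\phi\,ds$, leaving
\[
\int_{\Omega_h^\epsilon}Q(x,\tfrac{x}{\epsilon})\phi\,dx-\epsilon\!\int_{\Gamma^\epsilon}\!p(x,\tfrac{x}{\epsilon})\phi\,ds=-\epsilon\!\int_{\Omega_h^\epsilon}\!W^\epsilon\!\cdot\!\nabla\phi\,dx-\epsilon\!\int_{\Omega_h^\epsilon}\!(\nabla_x\!\cdot\!W)(x,\tfrac{x}{\epsilon})\phi\,dx-\epsilon^2\!\int_{\Gamma^\epsilon}\!(W^\epsilon\!\cdot\!\nu_1)\phi\,ds+\dots
\]
The two bulk terms are $\le C\epsilon\|\phi\|_{H^1(\Omega_h^\epsilon)}$ once $\|W^\epsilon\|_{L^2(\Omega_h^\epsilon)}$ and $\|(\nabla_x\cdot W)(\cdot,\cdot/\epsilon)\|_{L^2(\Omega_h^\epsilon)}$ are $O(1)$, by the standard norm estimate for $\epsilon$-periodic sampling; the interface term is handled with the $\epsilon$-scaled trace inequality $\|\phi\|_{L^2(\Gamma^\epsilon)}\le C(\epsilon^{-1/2}\|\phi\|_{L^2(\Omega_h^\epsilon)}+\epsilon^{1/2}\|\nabla\phi\|_{L^2(\Omega_h^\epsilon)})$ together with $\|W^\epsilon\|_{L^2(\Gamma^\epsilon)}=O(\epsilon^{-1/2})$, which again produces order $\epsilon\|\phi\|_{H^1}$.

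The main obstacle is the regularity bookkeeping for the auxiliary field rather than the algebra. The bulk remainder forces control of $\nabla_x W$ (hence of the $x$-derivatives of $Q$, $p$ and of the moving interface $\partial B(x)$), and the $\epsilon$-periodic sampling estimates require $W$ and $\nabla_x W$ to be admissible (e.g.\ in $L^2(\Omega;C_\#(\overline{Y(x)}))$) rather than merely $L^2$. This is exactly where (B1) and the extra two-scale regularity supplied by the hypotheses of Theorem \ref{MR} (notably Lemma \ref{addreg} and Theorem \ref{strongsol}) are used: they permit differentiating the cell problem in $x$ with $x$-uniform bounds and legitimize the sampling, in the spirit of \cite{Chechkin} and Section 1.5 of \cite{CPS}.
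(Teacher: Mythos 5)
Your proposal is correct and follows essentially the same route as the paper: the paper likewise solves the auxiliary cell problem $\Delta_y\Psi=Q$ in $Y(x)$ with Neumann data $p$ on $\partial B(x)$ (your $W$ is exactly $\nabla_y\Psi$), uses the chain rule to trade the fast divergence for a factor $\epsilon$, integrates by parts so that the $\nu_0$-part of the boundary term cancels $\epsilon\int_{\Gamma^\epsilon}p\phi\,ds$, and bounds the three remainders by $C\epsilon\|\phi\|_{H^1(\Omega_h^\epsilon)}$. Your version is in fact slightly more explicit than the paper's about the orientation of the normal, the scaled trace inequality on $\Gamma^\epsilon$, and the $x$-regularity of the cell solution needed to legitimize the sampling $y=x/\epsilon$, points the paper leaves implicit.
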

\begin{proof}
The problem
\begin{align*}
&\Delta_y\Psi(x,y)=Q(x,y)\,\,\,\,\mbox{in}\,\,\,Y(x) \\
&\nu_0\cdot{\nabla_y}\Psi=p(x,y)\,\,\,\,\mbox{on} \,\,\,\partial B(x)
\end{align*}
has a 1-periodic in $y$ solution that is unique up to an additive constant.
We multiply the first equation above with $\phi$ and integrate over $\Omega^\epsilon_h$ and use the thus obtained equality to get:
\begin{align*}
&\left|\int_{\Omega^\epsilon_h}Q(x,x/\epsilon)\phi(x)\, dx-\epsilon\int_{\Gamma^\epsilon} p(x,x/\epsilon)\phi(x)\, ds
\right|=\\
&\left|\int_{\Omega^\epsilon_h}\Delta_y\Psi(x,y)|_{y=\frac{x}{\epsilon}}\phi(x)\,dx-\epsilon\int_{\Gamma^\epsilon} p(x,x/\epsilon)\, ds\right|=\\
&\left|\epsilon\int_{\Omega^\epsilon_h}\Big(\nabla_x[\nabla_y\Psi(x,y)|_{y=\frac{x}{\epsilon}}]-\nabla_x\nabla_y\Psi(x,y)|_{y=\frac{x}{\epsilon}}\Big)\phi(x)\,dx-\epsilon\int_{\Gamma^\epsilon} p(x,x/\epsilon)\, ds\right|=\\
&\left|\epsilon\int_{\Gamma^\epsilon}(\nu_0+\epsilon\nu^\epsilon_1)\cdot\nabla_y\Psi(a,y)|_{y=\frac{x}{\epsilon}}\phi(x)\,ds
-\epsilon\int_{\Omega^\epsilon_h}\nabla_y\Psi|_{y=\frac{x}{\epsilon}}\nabla_x\phi(x)\,dx\right.\\
&\,\,\,\,\,\,\,\left.-\epsilon\int_{\Omega^\epsilon_h}\nabla_x\nabla_y\Psi(x,y)|_{y=\frac{x}{\epsilon}}\phi(x)\,dx-\epsilon\int_{\Gamma^\epsilon} p(x,x/\epsilon)\, ds\right|\leq\\
&\epsilon^2\left|\int_{\Gamma^\epsilon}\nu^\epsilon_1\cdot\nabla_y\Psi(a,y)|_{y=\frac{x}{\epsilon}}\phi(x)\,ds\right|+\epsilon\left|\int_{\Omega^\epsilon_h}\nabla_y\Psi|_{y=\frac{x}{\epsilon}}\nabla_x\phi(x)\,dx\right|\\
&\,\,\,\,\,\,\,+\epsilon\left|\int_{\Omega^\epsilon_h}\nabla_x\nabla_y\Psi(x,y)|_{y=\frac{x}{\epsilon}}\phi(x)\,dx\right|\leq\epsilon C
\|\phi\|_{H^1(\Omega^\epsilon_h)}
\end{align*}
The lemma is now proved.
\end{proof}

The last auxiliary lemma is a special case of Lemma 4 in \cite{Chechkin}, and therefore we will state it here without proof.

\begin{lemma}\label{lemch}
Let $\Pi_\epsilon$ be a subset of $\{x\in\Omega\,|\,\mbox{dist}(x,\partial\Omega)\leq
\frac{\sqrt{2}}{2}\epsilon\}$. Then the following inequality
\begin{align*}
\left|\int_{\Pi_\epsilon}\nabla_xu_0\phi\,dx\right|\leq C \epsilon^{3/2}
\|\phi\|_{H^1(\Omega_h^\epsilon)}
\end{align*}
holds for all $\phi\in H^1(\Omega_h^\epsilon;\partial\Omega)$. The constant $C$ does not depend on $\epsilon$.
\end{lemma}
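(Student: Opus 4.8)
The plan is to estimate the integral by Cauchy--Schwarz over the thin boundary strip, extracting one factor $\epsilon^{1/2}$ from the smallness (in measure) of the strip and one factor $\epsilon$ from the fact that $\phi$ vanishes on $\partial\Omega$. First I would enlarge the domain of integration: since $\Pi_\epsilon$ is a subset of the full strip $\Sigma_\epsilon:=\{x\in\Omega\,|\,\mathrm{dist}(x,\partial\Omega)\le \tfrac{\sqrt2}{2}\epsilon\}$, it suffices to prove the bound with $\Pi_\epsilon$ replaced by $\Sigma_\epsilon$, because $|\int_{\Pi_\epsilon}\,\cdot\,|\le\int_{\Sigma_\epsilon}|\cdot|$. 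I would also record the geometric fact, immediate from the definition of $J^\epsilon$ (whose cell centres lie at distance at least $\epsilon\sqrt2$ from $\Gamma$, while each cell has diameter $\epsilon\sqrt2$): the strip $\Sigma_\epsilon$ meets no inclusion $Q_j^\epsilon$, hence $\Sigma_\epsilon\subset\Omega_h^\epsilon$. This guarantees that both $\nabla_x u_0$ and $\phi$ are genuinely defined and $H^1$ on $\Sigma_\epsilon$.

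With this reduction, Cauchy--Schwarz gives $|\int_{\Sigma_\epsilon}\nabla_x u_0\,\phi\,dx|\le \|\nabla_x u_0\|_{L^2(\Sigma_\epsilon)}\,\|\phi\|_{L^2(\Sigma_\epsilon)}$, and I treat the two factors separately. For the first factor I would invoke the additional regularity of the macroscopic limit: by Theorem \ref{strongsol} and assumption (B3) one has $u_0\in H^3(\Omega)$ (for a.e.\ $t$, with the norm controlled in $L^2(I)$), so in two space dimensions the Sobolev embedding $H^2(\Omega)\hookrightarrow L^\infty(\Omega)$ yields $\nabla_x u_0\in L^\infty(\Omega)$. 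Since $\partial\Omega$ is a rectifiable curve of finite length, $|\Sigma_\epsilon|\le C\epsilon$, whence $\|\nabla_x u_0\|_{L^2(\Sigma_\epsilon)}\le \|\nabla_x u_0\|_{L^\infty(\Omega)}\,|\Sigma_\epsilon|^{1/2}\le C\epsilon^{1/2}$.

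For the second factor I would prove a Friedrichs-type inequality in the strip, exploiting $\phi|_{\partial\Omega}=0$. Introducing boundary-fitted coordinates $(\sigma,s)$ near $\partial\Omega$, with $\sigma$ the arclength and $s\in[0,\tfrac{\sqrt2}{2}\epsilon)$ the normal distance (admissible for small $\epsilon$ since $\partial\Omega$ is smooth with bounded curvature, so the coordinate Jacobian is bounded above and below uniformly in $\epsilon$), I write $\phi(\sigma,s)=\int_0^s\partial_s\phi(\sigma,t)\,dt$ and apply Cauchy--Schwarz in $t$ to get $|\phi(\sigma,s)|^2\le s\int_0^{\sqrt2\epsilon/2}|\partial_s\phi|^2\,dt$. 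Integrating in $s$ and $\sigma$ produces $\|\phi\|_{L^2(\Sigma_\epsilon)}\le C\epsilon\,\|\nabla\phi\|_{L^2(\Sigma_\epsilon)}\le C\epsilon\,\|\phi\|_{H^1(\Omega_h^\epsilon)}$. Combining the two factors yields the claimed bound $C\epsilon^{3/2}\|\phi\|_{H^1(\Omega_h^\epsilon)}$.

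I expect the main technical obstacle to be the rigorous setup of the Friedrichs estimate in the curved strip, namely verifying that the boundary-fitted change of variables is valid up to scale $\epsilon$ with $\epsilon$-uniform Jacobian bounds, and that the width factor comes out as exactly $\epsilon$ (so that the exponent $3/2$ is sharp and not smaller). A secondary point worth flagging is that the $L^\infty$ control of $\nabla_x u_0$ is the cleanest route to the factor $\epsilon^{1/2}$: if one only had $\nabla_x u_0\in L^p$ with $p<\infty$, the strip estimate would degrade to $\epsilon^{1/2-1/p}$, so it is precisely the two-dimensional embedding $H^2\hookrightarrow L^\infty$ (and hence assumption (B3)) that makes the exponent sharp.
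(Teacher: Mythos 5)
Your proof is correct, but note that the paper itself offers no proof at all: Lemma \ref{lemch} is stated there as ``a special case of Lemma 4 in \cite{Chechkin}'' and the reader is referred to that paper. So your argument fills a gap the authors deliberately left to the literature, and it does so with what is essentially the standard mechanism behind such boundary-layer estimates: (i) the strip $\Sigma_\epsilon$ of width $\tfrac{\sqrt2}{2}\epsilon$ contains no inclusions (your check via the definition of $J^\epsilon$ is exactly right, since cells with $j\in J^\epsilon$ lie at distance at least $\epsilon\sqrt2-\tfrac{\sqrt2}{2}\epsilon=\tfrac{\sqrt2}{2}\epsilon$ from $\Gamma$), (ii) a measure bound $|\Sigma_\epsilon|\le C\epsilon$ giving one factor $\epsilon^{1/2}$, and (iii) a Friedrichs/Poincar\'e inequality in the thin strip, using $\phi|_{\Gamma}=0$, giving the factor $\epsilon$. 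Two refinements are worth recording. First, the route through $\nabla_x u_0\in L^\infty$ (hence through $H^3$ regularity and assumption (B3)) is sufficient but not necessary: for any $f\in H^1(\Omega)$ one has the thin-strip estimate $\|f\|_{L^2(\Sigma_\epsilon)}\le C\epsilon^{1/2}\|f\|_{H^1(\Omega)}$ (integrate the trace inequality over the layers of the strip), so $\nabla_x u_0\in H^1(\Omega)$, i.e.\ $u_0\in H^2(\Omega)$, already yields the same factor $\epsilon^{1/2}$; your closing remark that the exponent would degrade without the $L^\infty$ embedding is therefore overstated. Second, since $u_0$ depends on $t$ and the available regularity is $L^2$ in time, the lemma should be read at a.e.\ fixed $t$ with $C$ proportional to $\|u_0(\cdot,t)\|_{H^2(\Omega)}$ (or the corresponding quantity from Theorem \ref{strongsol}); this is how it is used in the estimate of $I_5+I_7$, where the time integration is absorbed afterwards by the Gronwall argument.
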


\begin{proof}(of Theorem \ref{MR})
We define
\begin{align*}
&z_\epsilon(x,t):=u_0(x,t)+\epsilon \chi_\epsilon(x) M(x,x/\epsilon)\nabla u_0(x,t)-u_\epsilon(x,t),\\
&w_\epsilon(x,t):=v_0(x,x/\epsilon,t)-v_\epsilon(x,t).
\end{align*}
By, e.g., Theorem 4 in \cite{Lukkassen}, the functions $M(x,x/\epsilon)$ and $v_0(x,x/\epsilon,t)$ are well-defined functions in $H^1(\Omega_h^\epsilon)$ and $L^2(I;H^1(\Omega^\epsilon_l))$, respectively, and furthermore, we know
by construction that there exists a $C>0$ such that
\begin{eqnarray}
&\|z_\epsilon\|_{L^2(I;H^1(\Omega^\epsilon_h))}\leq C,\label{EB1}\\
&\|w_\epsilon\|_{L^2(I;H^1(\Omega^\epsilon_l))}\leq \epsilon
C,\label{EB2}
\end{eqnarray}
where the constant $C$ is independent of the choice of $\epsilon$.
In the following we will use the notation $u_1(x,y,t)=M(x,y)\nabla u_0(x,t)$.
We compute:
\begin{align*}
\Delta z_\epsilon(x,t)=&\Delta u_0(x,t)+\epsilon \chi_\epsilon \Delta_xu_1(x,y,t)|_{y=\frac{x}{\epsilon}} + 2 \chi_\epsilon \nabla_x\cdot\nabla_y u_1(x,y,t)|_{y=\frac{x}{\epsilon}} \\
&
%+\nabla_y\cdot\nabla_x u_1(x,y,t)|_{y=\frac{x}{\epsilon}}
+ \frac{1}{\epsilon}\chi_\epsilon \Delta_y
u_1(x,y,t)|_{y=\frac{x}{\epsilon}}+\epsilon\Delta\chi_\epsilon u_1(x,y,t)|_{y=\frac{x}{\epsilon}}-\Delta u_\epsilon(x,t)\\
&+2\epsilon \nabla \chi_\epsilon \cdot \nabla_x u_1(x,y,t)|
_{y=\frac{x}{\epsilon}}
+ 2\nabla \chi_\epsilon \cdot \nabla_y u_1(x,y,t)|_{y=\frac{x}{\epsilon}}  \\
\Delta_x w_\epsilon(x,\dfrac{x}{\epsilon},t)=&\Delta_xv_0(x,t)
+ \epsilon^{-1}\nabla_x\cdot\nabla_y v_0(x,y,t)|_{y=\frac{x}{\epsilon}}\\
&+\epsilon^{-1}\nabla_y\cdot\nabla_x v_0(x,y,t)|_{y=\frac{x}{\epsilon}}+ \frac{1}{\epsilon^2}\Delta_y
v_0(x,y,t)|_{y=\frac{x}{\epsilon}}-\Delta_xv_\epsilon(x,y,t)|_{y=\frac{x}{\epsilon}}.
\end{align*}
We use that
\begin{align*}
\theta(x)\partial_tu_0-\nabla_x\cdot({\cal D}(x)\nabla_x u_0 -\bar{q}u_0)=-\int_{\partial B(x)}
\nu_0\cdot(D_l\nabla_y v_0)\,d\sigma & \mbox{in}\,\, \Omega
\end{align*}
and $\Delta_y u_1(x,y)=0$ and $\partial_t u_\epsilon = \nabla\cdot(D_h \nabla u_\epsilon-q_\epsilon u_\epsilon)$ to obtain
\begin{align*}
D_h\Delta_x z_\epsilon-\partial_t z_\epsilon=
&D_h \Delta_xu_0
+ \epsilon \chi_\epsilon D_h\Delta_x u_1
+2\chi_\epsilon D_h\nabla_x\cdot\nabla_y u_1
+\epsilon D_h \Delta\chi_\epsilon u_1\\
&+2\epsilon D_h\nabla \chi_\epsilon \cdot \nabla_x u_1
+ 2 D_h \nabla \chi_\epsilon \cdot \nabla_y u_1\\
&-\frac{1}{\theta}\Big(\nabla_x\cdot({\cal D}(x)\nabla_x u_0-\bar{q}u_0)-\int_{\partial B(x)}
\nu_0\cdot(D_l\nabla_yv_0)\,d\sigma\Big)-\epsilon\chi_\epsilon\partial_t u_1\\
&-\nabla\cdot(q_\epsilon u_\epsilon),\\
\epsilon^2D_l\Delta_x w_\epsilon-\partial_t w_\epsilon=&\epsilon^2D_l \Delta_xv_0 +\epsilon D_l(\nabla_x\cdot\nabla_y v_0+\nabla_y\cdot\nabla_x v_0).
\end{align*}
On $\Gamma^\epsilon$ we have
\begin{align*}
\nu^\epsilon\cdot \nabla z_\epsilon=&-\nu^\epsilon\cdot \nabla u_\epsilon+\nu^\epsilon\cdot\nabla_x u_0+\epsilon \nu^\epsilon\cdot \nabla_x u_1+\nu^\epsilon\cdot \nabla_y u_1\\
=&-\epsilon^2\frac{D_l}{D_h}\nu^\epsilon\cdot\nabla v_\epsilon
+\nu^\epsilon\cdot\nabla_x u_0+\epsilon \nu^\epsilon\cdot \nabla_x u_1+\nu^\epsilon\cdot \nabla_y u_1,\\
\nu^\epsilon\cdot \nabla w_\epsilon=&-\nu^\epsilon\cdot \nabla v_\epsilon+\nu^\epsilon\cdot\nabla_x v_0+\epsilon^{-1}\nu^\epsilon\cdot \nabla_y v_0.
\end{align*}
Now, we multiply with $\phi$ and integrate by parts to get
\begin{eqnarray}\label{weakz}
\int_{\Omega^\epsilon_h}\partial_t z_\epsilon \phi \,dx&+&
\int_{\Omega^\epsilon_h}(D_h\nabla z_\epsilon \cdot \nabla \phi+q^\epsilon\cdot\nabla z_\epsilon \phi) \,dx=\epsilon\int_{\Omega^\epsilon_h}\chi_\epsilon \partial_t u_1\phi\,dx
-\int_{\Omega^\epsilon_h}D_h \Delta_xu_0\phi\,dx\nonumber\\
&-&\epsilon\int_{\Omega^\epsilon_h}D_h\chi_\epsilon \Delta_x u_1\phi\,dx
-2\int_{\Omega^\epsilon_h} D_h \chi_\epsilon \nabla_x\cdot\nabla_y u_1 \phi\,dx\nonumber\\
&+&\int_{\Omega^\epsilon_h}\frac{1}{\theta}\Big(\nabla_x\cdot({\cal D}(x)\nabla_x u_0)-\int_{\partial B(x)}
\nu_0\cdot(D_l\nabla_yv_0)\,d\sigma\Big)\phi\,dx \\
&+&\epsilon^2\int_{\Gamma^\epsilon}D_l\nu^\epsilon\cdot\nabla v_\epsilon \phi\,ds
-D_h\int_{\Gamma^\epsilon}\nu^\epsilon\cdot\nabla_x u_0\phi\,ds\nonumber\\
&-&\epsilon D_h\int_{\Gamma^\epsilon}\nu^\epsilon\cdot \nabla_x u_1\phi\,ds-D_h\int_{\Gamma^\epsilon}\nu^\epsilon\cdot \nabla_y u_1\phi\,ds\nonumber \\
&-&\int_{\Omega_h^\epsilon}(\epsilon D_h \Delta\chi_\epsilon u_1+2\epsilon D_h\nabla \chi_\epsilon \cdot \nabla_x u_1
+2 D_h \nabla \chi_\epsilon \cdot \nabla_y u_1)\phi\,dx\nonumber\\
&-&\int_{\Omega_h^\epsilon}(\frac{1}{\theta}\bar{q}-q_\epsilon)\cdot\nabla u_0\phi\,dx
-\epsilon\int_{\Omega_h^\epsilon}\chi_\epsilon u_1 q_\epsilon \cdot \nabla \phi\,dx,
%&+D_h\int_{\partial\Omega} \nu\cdot(\nabla u_\epsilon-\nabla_xu_0-\epsilon\nabla_xu_1-\nabla_y u_1)\phi\,ds,
\end{eqnarray}
and
\begin{eqnarray}\label{weakw}
\int_{\Omega^\epsilon_l}\partial_t w_\epsilon \psi \,dx&+&
\epsilon^2 \int_{\Omega^\epsilon_l}D_l\nabla w_\epsilon \cdot \nabla \psi \,dx=-\epsilon^2\int_{\Omega^\epsilon_l}D_l \Delta_xv_0\psi\,dx\nonumber\\
&-&\epsilon\int_{\Omega^\epsilon_l} D_l(\nabla_x\cdot\nabla_y v_0+\nabla_y\cdot\nabla_x v_0) \psi\,dx
-\epsilon^2\int_{\Gamma^\epsilon}D_l\nu^\epsilon\cdot\nabla v_\epsilon \psi\,ds\nonumber\\
&+&\epsilon^2\int_{\Gamma^\epsilon}D_l\nu^\epsilon\cdot\nabla_x v_0 \psi\,ds+\epsilon\int_{\Gamma^\epsilon}D_l\nu^\epsilon\cdot\nabla_y v_0 \psi\,ds.
\end{eqnarray}
We take into account the identity
\begin{align*}
\big(\nabla_y\cdot \nabla_x u_1(x,y)\big)|_{y=x/\epsilon}=
\epsilon \nabla_x\cdot(\nabla_x u_1(x,x/\epsilon))-\epsilon\big(\Delta_x u_1(x,y)\big)|_{y=x/\epsilon},
\end{align*}
which gives
\begin{align*}
\epsilon D_h\int_{\Gamma^\epsilon}\nu^\epsilon\cdot\nabla_xu_1|_{y=x/\epsilon} z_\epsilon\,ds=&\epsilon D_h\int_{\Omega_h^\epsilon} \nabla_x u_1|_{y=x/\epsilon} \cdot \nabla (\chi_\epsilon z_\epsilon)
+ \chi_\epsilon z_\epsilon \nabla_x\cdot(\nabla_xu_1|_{y=x/\epsilon}) \,dx\\
%+D_h\int_{\Omega_h^\epsilon}\nabla_y\cdot\nabla_xu_1 z_\epsilon\,dx\\
=&\epsilon D_h\int_{\Omega_h^\epsilon} \nabla_x u_1 \cdot \nabla
(\chi_\epsilon z_\epsilon) +\chi_\epsilon z_\epsilon\Delta_xu_1 \,dx\\
&+D_h\int_{\Omega_h^\epsilon}\chi_\epsilon z_\epsilon \nabla_y\cdot\nabla_xu_1\, dx,
\end{align*}
and also the boundary condition $\nu_0 \cdot \nabla_yu_1 =-\nu_0\cdot \nabla_xu_0$ for $y\in \partial B(x)$, which holds also on $\Gamma^\epsilon$, we add the two equations \eqref{weakz} and \eqref{weakw}, substitute $\phi=z_\epsilon$ and $\psi=w_\epsilon$ and obtain
\begin{align*}
&\frac{1}{2}\partial_t\|z_\epsilon\|^2_{L^2(\Omega_h^\epsilon)}+\frac{1}{2}\partial_t\|w_\epsilon\|^2_{L^2(\Omega_l^\epsilon)}
%+D_h\|\nabla z_\epsilon\|^2_{L^2(\Omega_h^\epsilon)}
+\int_{\Omega_h^\epsilon}(D_h\nabla z_\epsilon\cdot \nabla z_\epsilon+q^\epsilon\cdot\nabla z_\epsilon)\,dx
+\epsilon^2
D_l\|\nabla w_\epsilon\|^2_{L^2(\Omega_l^\epsilon)} =\\
&-\int_{\Omega^\epsilon_h}D_h \Delta_xu_0z_\epsilon\,dx
-\int_{\Omega^\epsilon_h} \chi_\epsilon D_h(\nabla_x\cdot\nabla_y u_1) z_\epsilon \,dx+\int_{\Omega^\epsilon_h}\frac{1}{\theta}\nabla_x\cdot({\cal D}(x)\nabla_x u_0)z_\epsilon\,dx\\
&-\epsilon D_h\int_{\Gamma^\epsilon}\nu^\epsilon_1\cdot(\nabla_x u_0+\nabla_yu_1)z_\epsilon\,ds\\
%-D_h\int_{\Gamma^\epsilon}\nu^\epsilon\cdot \nabla_y u_1z_\epsilon\,ds\\
&-\int_{\Omega^\epsilon_h}(\int_{\partial B(x)}
\nu_0\cdot(D_l\nabla_yv_0)\,d\sigma)z_\epsilon\,dx+\epsilon\int_{\Gamma^\epsilon}D_l\nu^\epsilon\cdot\nabla_y v_0 z_\epsilon\,ds\\
&+\epsilon\int_{\Omega^\epsilon_h}\chi_\epsilon \partial_t u_1z_\epsilon\,dx-\epsilon\int_{\Omega^\epsilon_h}D_h\nabla_x u_1 \cdot\nabla(\chi_\epsilon z_\epsilon)\,dx\\
&-\epsilon^2\int_{\Omega^\epsilon_l}D_l \Delta_xv_0w_\epsilon\,dx
-\epsilon\int_{\Omega^\epsilon_l} D_l(\nabla_x\cdot\nabla_y v_0+\nabla_y\cdot\nabla_x v_0) w_\epsilon\,dx
+\epsilon^2\int_{\Gamma^\epsilon}D_l\nu^\epsilon\cdot\nabla_x v_0 w_\epsilon\,ds\\
&+\epsilon^2\int_{\Gamma^\epsilon}D_l\nu^\epsilon\cdot(\epsilon\nabla v_\epsilon-\nabla_yv_0)u_1\,ds -\int_{\Omega_h^\epsilon}(\epsilon D_h \Delta\chi_\epsilon u_1+2\epsilon D_h\nabla \chi_\epsilon \cdot \nabla_x u_1
+2 D_h \nabla \chi_\epsilon \cdot \nabla_y u_1)z_\epsilon \,dx\\
&-\int_{\Omega_h^\epsilon}(\frac{1}{\theta}\bar{q}-q_\epsilon)\cdot\nabla u_0z_\epsilon\,dx
-\epsilon\int_{\Omega_h^\epsilon}\chi_\epsilon u_1 q_\epsilon \cdot \nabla z_\epsilon\,dx,
%&=I_1+I_2+...+I_{12},
\end{align*}
where we have also used that $\phi-\psi=z_\epsilon-w_\epsilon=\epsilon u_1$ on $\Gamma^\epsilon$ and $\phi=\epsilon u_1$ on $\partial\Omega$.

We know that there exist $\beta>0$ and $\gamma\geq 0$ such that
\begin{align*}
\beta \|z_\epsilon\|_{H^1(\Omega_h^\epsilon)}\leq \int_{\Omega_h^\epsilon}(D_h\nabla z_\epsilon\cdot \nabla z_\epsilon+q^\epsilon\cdot\nabla z_\epsilon)\,dx+\gamma
\|z_\epsilon\|_{L^2(\Omega_h^\epsilon)},
\end{align*}
and we use this to estimate
\begin{align*}
&\frac{1}{2}\partial_t\|z_\epsilon\|^2_{L^2(\Omega_h^\epsilon)}+\frac{1}{2}\partial_t\|w_\epsilon\|^2_{L^2(\Omega_l^\epsilon)}
+\beta\|z_\epsilon\|^2_{H^1(\Omega_h^\epsilon)}+\epsilon^2
D_l\|\nabla w_\epsilon\|^2_{L^2(\Omega_l^\epsilon)} \leq \gamma\|z_\epsilon\|_{L^2(\Omega_h^\epsilon)}+ I_1+I_2+...+I_{11}
\end{align*}
where
\begin{align*}
I_1=&\left|\int_{\Omega^\epsilon_h}D_h \nabla_x\cdot(\nabla_xu_0+\nabla_y u_1)z_\epsilon\,dx
-\int_{\Omega^\epsilon_h}\frac{1}{\theta}\nabla_x\cdot({\cal D}(x)\nabla_x u_0)z_\epsilon\,dx\right.\\
&\left. +\epsilon D_h\int_{\Gamma^\epsilon}\nu_1\cdot(\nabla_x u_0+\nabla_yu_1)z_\epsilon\,ds\right|,\\
I_2=&\left|\int_{\Omega^\epsilon_h}(\int_{\partial B(x)}
\nu_0\cdot(D_l\nabla_yv_0)\,d\sigma)z_\epsilon\,dx-\epsilon\int_{\Gamma^\epsilon}D_l\nu_0\cdot\nabla_y v_0 z_\epsilon\,ds\right|,\\
I_3=&\left|\int_{\Omega^\epsilon_h} (1-\chi_\epsilon) D_h(\nabla_x\cdot\nabla_y u_1) z_\epsilon \,dx\right|,\\
I_4=&\left|\epsilon^2\int_{\Gamma^\epsilon}\Big(D_h(\nu_2+O(\epsilon))\cdot(\nabla_x u_0-\nabla_yu_1)+D_l\nu^\epsilon_1\cdot\nabla_yv_0\Big)z_\epsilon\,ds\right|,\\
I_5=&\left|\int_{\Omega_h^\epsilon}(\epsilon D_h \Delta\chi_\epsilon u_1+2\epsilon D_h\nabla \chi_\epsilon \cdot \nabla_x u_1
+2 D_h \nabla \chi_\epsilon \cdot \nabla_y u_1)z_\epsilon \,dx\right|,\\
I_6=&\left|\epsilon\int_{\Omega^\epsilon_h}\chi_\epsilon \partial_t u_1z_\epsilon\,dx\right|,\\
I_7=&\left|\epsilon\int_{\Omega^\epsilon_h}D_h\nabla_x u_1 \cdot \nabla(\chi_\epsilon z_\epsilon)\,dx\right|,\\
I_8=&\left|\epsilon^2\int_{\Omega^\epsilon_l}D_l \Delta_xv_0w_\epsilon\,dx
+\epsilon\int_{\Omega^\epsilon_l} D_l(\nabla_x\cdot\nabla_y v_0+\nabla_y\cdot\nabla_x v_0) w_\epsilon\,dx
-\epsilon^2\int_{\Gamma^\epsilon}D_l\nu^\epsilon\cdot\nabla_x v_0 w_\epsilon\,ds\right|,\\
I_9=&\left|\epsilon^2\int_{\Gamma^\epsilon}D_l\nu^\epsilon\cdot(\epsilon\nabla v_\epsilon-\nabla_yv_0)u_1\,ds\right|,\\
I_{10}=&\left|\int_{\Omega_h^\epsilon}(\frac{1}{\theta}\bar{q}-q_\epsilon)\cdot\nabla u_0z_\epsilon\,dx\right|,\\
I_{11}=&\left|\epsilon\int_{\Omega_h^\epsilon}\chi_\epsilon u_1 q_\epsilon \cdot \nabla z_\epsilon\,dx\right|.
\end{align*}
For $I_1$ we use that $u_1=M\nabla u_0$ and that ${\cal D}(x):=D_h\int_{Y(x)}(I+\nabla_yM)\,dy$. We set
\begin{align*}
&Q(x,y)=\nabla_x\cdot((I+\nabla_yM)\nabla_xu_0)-\frac{1}{\theta(x)}\nabla_x\cdot\int_{Y(x)}(I+\nabla_yM)\nabla_xu_0\,dy,\\
&p(x,y)=-\nu_1\cdot(I+\nabla_yM),
\end{align*}
and use Lemma \ref{lem2} to obtain $I_1\leq \epsilon C
\|z_\epsilon\|_{H^1(\Omega_h^\epsilon)}$. Lemma \ref{lem1} asserts
that the conditions of Lemma \ref{lem2} are satisfied for these
choices of $Q$ and $p$.

For $I_2$ we also apply Lemma \ref{lem2}, this time for the choice
\begin{align*}
&Q(x,y)=\frac{1}{\theta(x)}\int_{\partial B(x)}\nu_0\cdot\nabla_yv_0\,d\sigma,\\
&p(x,y)=\nu_0\cdot\nabla_yv_0,
\end{align*}
and we again get
%(Change lemma!) and Lemma 4.2 (to be added, see \cite{Chechkin}) to obtain
$I_2\leq \epsilon C \|z_\epsilon\|_{H^1(\Omega_h^\epsilon)}$
%The identity
%For $I_1$, $I_2$, $I_3$ and $I_4$ we use arguments as in \cite{Chechkin} (Elaborate!!!) to obtain $|I_1+I_2+I_3+I_4|<\epsilon C$,
%for $I_5$ and $I_6$ we use Lemma \ref{lem1} (Change lemma!) to yield
%$|I_5+I_6|<C\epsilon$
%\begin{align*}
%|I_1|+|I_2|+...+|I_9|<C\epsilon,
%\end{align*}
with $C$ independent of $\epsilon$. For $I_3$ we have $I_3\leq \sqrt{\epsilon} C \|z_\epsilon\|_{H^1(\Omega_h^\epsilon)}$.
Application of the regularity results in Lemma \ref{addreg} and Theorem \ref{strongsol} results in
$I_{4}+I_6\leq C\epsilon\|z_\epsilon\|_{H^1(\Omega_h^\epsilon)}$.
%$I_6\leq  C\epsilon\|z_\epsilon\|_{H^1(\Omega_h^\epsilon)}$.
With the use of Lemma \ref{lemch} and the properties of $\chi_\epsilon$ we
estimate $I_5+I_7\leq \sqrt{\epsilon} C \|z_\epsilon\|_{H^1(\Omega_h^\epsilon)}$. Another application of the regularity results gives $I_8 \leq \epsilon C\|w_\epsilon\|_{H^1(\Omega_l^\epsilon)}$ and $I_9\leq C\epsilon \|u_0\|_{H^2(\Omega_h^\epsilon)}(\|v_\epsilon\|_{H^2(\Omega_l^\epsilon)}+\|v_0\|_{H^2(\Omega;H^2(B(x)))})$.
%At last, the properties of $\chi_\epsilon$ give $I_7\leq \epsilon C \|z_\epsilon\|_{H^1(\Omega_h^\epsilon)}$.
For $I_{10}$ and $I_{11}$ we use the assumptions on $q$ and $q^\epsilon$ (as stated in (A2) and (B2)) to get $I_{10}+I_{11}\leq \sqrt{\epsilon}C\|z_\epsilon\|_{H^1(\Omega_h^\epsilon)}$.

Now we obtain
\begin{align*}
&\frac{1}{2}\partial_t\|z_\epsilon\|^2_{L^2(\Omega_h^\epsilon)}+\frac{1}{2}\partial_t\|w_\epsilon\|^2_{L^2(\Omega_l^\epsilon)}
+D_h\|\nabla z_\epsilon\|^2_{L^2(\Omega_h^\epsilon)}+\epsilon^2
D_l\|\nabla w_\epsilon\|^2_{L^2(\Omega_l^\epsilon)}\leq \|z_\epsilon\|_{L^2(\Omega_h^\epsilon)}+\\
&\epsilon C_1\|u_0\|_{H^2(\Omega_h^\epsilon)}(\|v_\epsilon\|_{H^2(\Omega_l^\epsilon)}+\|v_0\|_{H^2(\Omega;H^2(B(x)))})+\sqrt{\epsilon}C_2\|z_\epsilon\|_{H^1(\Omega_h^\epsilon)}+\epsilon C_3 \|w_\epsilon\|_{H^1(\Omega_l^\epsilon)}
\end{align*}
%The proof of the theorem is now finished by
Using the energy bounds (\ref{EB1}) and  (\ref{EB2}) together with
a Gronwall-type argument lead to
%\begin{align*}
%\max_{t\in[0,T]} (\|z_\epsilon\|^2_{L^2(\Omega_h^\epsilon)}+\|w_\epsilon\|^2_{L^2(\Omega_l^\epsilon)})\leq 2\epsilon C'_1+2\sqrt{\epsilon}C_2\|z_\epsilon\|_{L^2(0,T;H^1(\Omega_h^\epsilon))}+
%2\epsilon C_3\|w_\epsilon\|_{L^2(0,T;H^1(\Omega_l^\epsilon))}
%\end{align*}
%where
%\begin{align*}
%C'_1=C_1\|u_0\|_{L^2(0,T;H^2(\Omega_h^\epsilon))}(\|v_\epsilon\|_{L^2(0.T;H^2(\Omega_l^\epsilon))}+\|v_0\|_{L^2(0,T;H^2(\Omega;H^2(B(x))))})+
%\end{align*}
%Integrating the first inequality gives
%\begin{align*}
%D_h\int_0^T\|\nabla z_\epsilon\|^2_{L^2(\Omega_h^\epsilon)}\,dt + \epsilon^2 D_l\int_0^T\|\nabla w_\epsilon\|^2_{L^2(\Omega_l^\epsilon)}\,dt\leq&\frac{1}{2}\Big(\|z_\epsilon(t=0)\|^2_{L^2(\Omega_h^\epsilon)}-
%\|z_\epsilon(t=T)\|^2_{L^2(\Omega_h^\epsilon)}\\
%&+\|w_\epsilon(t=0)\|^2_{L^2(\Omega_l^\epsilon)}-
%\|w_\epsilon(t=T)\|^2_{L^2(\Omega_l^\epsilon)}\Big)\\
%&+
%\epsilon C'_1+\sqrt{\epsilon}C_2\|z_\epsilon\|_{L^2(0,T;H^1(\Omega_h^\epsilon))}+
%\epsilon C_3\|w_\epsilon\|_{L^2(0,T;H^1(\Omega_l^\epsilon))}\\
%\leq&
%\end{align*}
%This results in
\begin{align}
\|z_\epsilon\|^2_{L^\infty(I,L^2(\Omega_h^\epsilon))}+&
\|w_\epsilon\|^2_{L^\infty(I,L^2(\Omega_l^\epsilon))}+
\|z_\epsilon\|^2_{L^2(I,H^1(\Omega_h^\epsilon))}+\nonumber\\
&+\epsilon\|w_\epsilon\|^2_{L^2(I,H^1(\Omega_l^\epsilon))}\label{inest}\leq \epsilon \tilde{C}_1 +\sqrt{\epsilon}\tilde{C}_2 \|z_\epsilon\|_{L^2(I,H^1(\Omega_h^\epsilon))},
\end{align}
which, in particular, implies
\begin{align*}
\|z_\epsilon\|^2_{L^2(I,H^1(\Omega_h^\epsilon))}
\leq \epsilon \tilde{C}_1 +\sqrt{\epsilon}\tilde{C}_2 \|z_\epsilon\|_{L^2(I,H^1(\Omega_h^\epsilon))}.
\end{align*}
and thus
\begin{align*}
\|z_\epsilon\|_{L^2(I,H^1(\Omega_h^\epsilon))}
\leq \sqrt{\epsilon}\frac{1}{2}( \tilde{C}^2_2+\sqrt{\tilde{C}_2+4\tilde{C}_1}).
\end{align*}
Combining this with \eqref{inest} gives the result
\begin{align*}
\|z_\epsilon\|_{L^\infty(I,L^2(\Omega_h^\epsilon))}+
\|w_\epsilon\|_{L^\infty(I,L^2(\Omega_l^\epsilon))}+
\|z_\epsilon\|_{L^2(I,H^1(\Omega_h^\epsilon))}+&
\epsilon\|w_\epsilon\|_{L^2(I,H^1(\Omega_l^\epsilon))}\leq
c\sqrt{\epsilon},
\end{align*}
%Applying the Gronwall lemma for this inequality gives
%\begin{eqnarray}
%||u_\epsilon-u_0^\epsilon||_{L^\infty(S, L^2(\Omega_h^\epsilon))}+||v_\epsilon-v_0^\epsilon||_{L^\infty(S, L^2(\Omega_l^\epsilon))}+\nonumber\\
%||u_\epsilon-u_1^\epsilon||_{L^\infty(S, H^1(\Omega_h^\epsilon))}+\epsilon||v_\epsilon-v_0^\epsilon||_{L^\infty(S, H^1(\Omega_l^\epsilon))}\leq c\epsilon^{1/4},
%\end{eqnarray}
where the constant $c$ is independent of $\epsilon$.
The last step uses the evident estimate $\|\epsilon u_1(1-\chi_\epsilon)\|_{H^1(\Omega_h^\epsilon)}\leq C\sqrt{\epsilon}$, and the theorem is proven.
\end{proof}

\section*{Acknowledgments}
We acknowledge fruitful  discussions with Gregory Chechkin regarding homogenization techniques for non-periodic media.
We also thank Eduard Marusic-Paloka for an interesting correspondence on the best corrector estimates (upper bounds on convergence rates) existing
for the stationary Stokes and Navier-Stokes problems.

\bibliographystyle{siam}
\bibliography{biblio}

%\newpage

\end{document}